  \providecommand\BibTeX{{%
    \normalfont B\kern-0.5em{\scshape i\kern-0.25em b}\kern-0.8em\TeX}}}
\newsavebox{\measurebox}
\newtheorem{definition}{Definition}
\newtheorem{lemma}{Lemma}
\newtheorem{theorem}{Theorem}
\newtheorem*{theorem*}{Theorem}
\newtheorem{proposition}{Proposition}
\newtheorem*{proposition*}{Proposition}
\newenvironment{conditions}[1][where:]
  {#1 \begin{tabular}[t]{>{}l<{} @{{}={}} l}}
  {\end{tabular}\\[\belowdisplayskip]}
\newcommand{\methodedge}{\textsc{AnoEdge}}
\newcommand{\methodgraph}{\textsc{AnoGraph}}
\newcommand{\sedanspot}{\textsc{SedanSpot}}
\newcommand{\densestream}{\textsc{DenseStream}}
\newcommand{\densealert}{\textsc{DenseAlert}}
\newcommand{\spotlight}{\textsc{SpotLight}}
\newcommand{\anomrank}{\textsc{AnomRank}}
\DeclarePairedDelimiter\ceil{\lceil}{\rceil}
\newcommand{\EE}{\mathbb{E}} % Expectation
\newcommand{\one}{\mathbbm{1}}  % Identity
\gdef\@copyrightpermission{
  \begin{minipage}{0.3\columnwidth}
   % \href{https://creativecommons.org/licenses/by/4.0/}{\includegraphics[width=0.90\textwidth]{}}
   {\includegraphics[width=0.90\textwidth]{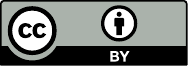}}
  \end{minipage}\hfill
  \begin{minipage}{0.7\columnwidth}
   \href{https://creativecommons.org/licenses/by/4.0/}{This work is licensed under a Creative Commons Attribution International 4.0 License.}
  \end{minipage}
  \vspace{5pt}
}
\begin{document}

%%
%% The "title" command has an optional parameter,
%% allowing the author to define a "short title" to be used in page headers.
% \title{Sketch-Based Streaming Anomaly Detection in Dynamic Graphs}
% \title{Higher-Order Sketch-Based Anomaly Detection in Dynamic Graphs}
\title{Sketch-Based Anomaly Detection in Streaming Graphs}

%%
%% The "author" command and its associated commands are used to define
%% the authors and their affiliations.
%% Of note is the shared affiliation of the first two authors, and the
%% "authornote" and "authornotemark" commands
%% used to denote shared contribution to the research.

\author{Siddharth Bhatia}

\affiliation{%
  \institution{TurboML}
%   \country{Singapore}
}
\email{siddharth@turboml.com}

\author{Mohit Wadhwa}
\affiliation{%
  \institution{}
%   \country{}
}
\email{mailmohitwadhwa@gmail.com}

\author{Kenji Kawaguchi}
\affiliation{%
  \institution{National University of Singapore}
%   \country{Singapore}
}
\email{kenji@comp.nus.edu.sg}

\author{Neil Shah}
\affiliation{%
  \institution{Snap Inc.}
%   \country{United States}
}
\email{nshah@snap.com}

\author{Philip S. Yu}
\affiliation{%
  \institution{University of Illinois at Chicago}
%   \country{United States}
}
\email{psyu@uic.edu}

\author{Bryan Hooi}
\affiliation{%
  \institution{National University of Singapore}
%   \country{Singapore}
}
\email{bhooi@comp.nus.edu.sg}

\renewcommand{\shortauthors}{Siddharth Bhatia et al.}

%%
%% By default, the full list of authors will be used in the page
%% headers. Often, this list is too long, and will overlap
%% other information printed in the page headers. This command allows
%% the author to define a more concise list
%% of authors' names for this purpose.
%\renewcommand{\shortauthors}{Trovato and Tobin, et al.}

%%
%% The abstract is a short summary of the work to be presented in the
%% article.
\begin{abstract}
Given a stream of graph edges from a dynamic graph, how can we assign anomaly scores to edges and subgraphs in an online manner, for the purpose of detecting unusual behavior, using constant time and memory? For example, in intrusion detection, existing work seeks to detect either anomalous edges or anomalous subgraphs, but not both. In this paper, we first extend the count-min sketch data structure to a higher-order sketch. This higher-order sketch has the useful property of preserving the dense subgraph structure (dense subgraphs in the input turn into dense submatrices in the data structure). We then propose 4 online algorithms that utilize this enhanced data structure, which (a) detect both edge and graph anomalies; (b) process each edge and graph in constant memory and constant update time per newly arriving edge, and; (c) outperform state-of-the-art baselines on 4 real-world datasets. Our method is the first streaming approach that incorporates dense subgraph search to detect graph anomalies in constant memory and time.
\end{abstract}

%%
%% The code below is generated by the tool at http://dl.acm.org/ccs.cfm.
%% Please copy and paste the code instead of the example below.
%%

\begin{CCSXML}
<ccs2012>
<concept>
<concept_id>10010147.10010257.10010258.10010260.10010229</concept_id>
<concept_desc>Computing methodologies~Anomaly detection</concept_desc>
<concept_significance>500</concept_significance>
</concept>
</ccs2012>
\end{CCSXML}

\ccsdesc[500]{Computing methodologies~Anomaly detection}

%%
%% Keywords. The author(s) should pick words that accurately describe
%% the work being presented. Separate the keywords with commas.
\keywords{Anomaly Detection, Dynamic Graphs, Stream, Sketch}

%%
%% This command processes the author and affiliation and title
%% information and builds the first part of the formatted document.
\maketitle

\section{Introduction}

Consider an intrusion detection system, in which anomalous behavior can be described as an individual or a group of attackers making a large number of connections to some set of targeted machines to restrict accessibility or look for potential vulnerabilities. We can model this as a dynamic graph, where nodes correspond to machines, and each edge represents a timestamped connection from one machine to another. 

In this graph, edge anomalies include individual connections that are significantly more malicious than the rest of the connections in the network. In addition, anomalous behavior often also takes the form of a dense subgraph which could represent a group of malicious nodes that are communicating with each other in a way that is unusual compared to the rest of the graph, as shown in several real-world datasets in \citep{shin2017densealert,eswaran2018spotlight,bhatia2020midas}. Detecting both these edge and subgraph anomalies together provides valuable insights into the structure and behavior of the network and can help identify trends or patterns that might not be evident by considering only one type of anomaly.

Similarly, in a financial system, edge anomalies might include transactions that are significantly larger or more frequent than the rest of the transactions in the system. Subgraph anomalies might include groups of individuals or businesses who are significantly more likely to be involved in fraudulent activity than the rest of the system. Identifying both types of anomalies simultaneously can help detect fraudulent activity and protect against financial loss.

Thus, we ask the question: Given a stream of graph edges from a dynamic graph, how can we assign anomaly scores to both edges and subgraphs in an online manner, for the purpose of detecting unusual behavior, using constant memory and constant update time per newly arriving edge?

Several approaches \citep{akoglu2010oddball,chakrabarti2004autopart,hooi2017graph,jiang2016catching,kleinberg1999authoritative,shin2018patterns,tong2011non} aim to detect anomalies in graph settings. However, these approaches focus on static graphs, whereas many real-world graphs are time-evolving in nature. In streaming or online graph scenarios, some methods can detect the presence of anomalous edges, \citep{eswaran2018sedanspot,bhatia2020midas,belth2020mining,chang2021f}, while others can detect anomalous subgraphs \citep{shin2017densealert,eswaran2018spotlight,yoon2019fast}. However, all existing methods are limited to either anomalous edge or graph detection but not able to detect both kinds of anomalies, as summarized in Table \ref{tab:comparison}. As we discuss in Section \ref{sec:exp}, our approach outperforms existing methods in both accuracy and running time; and on both anomalous edge and subgraph detection scenarios. Moreover, our approach is the only streaming method that makes use of dense subgraph search to detect graph anomalies while only requiring constant memory and time.

We first extend the two-dimensional sketch to a \textit{higher-order sketch} to enable it to embed the relation between the source and destination nodes in a graph. A higher-order sketch has the useful property of preserving the dense subgraph structure; dense subgraphs in the input turn into dense submatrices in this data structure. Thus, the problem of detecting a dense subgraph from a large graph reduces to finding a dense submatrix in a constant size matrix, which can be achieved in constant time. The higher-order sketch allows us to propose several algorithms to detect both anomalous edges and subgraphs in a streaming manner. We introduce two edge anomaly detection methods, \methodedge-G, and \methodedge-L, and two graph anomaly detection methods \methodgraph, and \methodgraph-K, that use the same data structure to detect the presence of a dense submatrix, and consequently anomalous edges, or subgraphs respectively. All our approaches process edges and graphs in constant time, and are independent of the graph size, i.e., they require constant memory. We also provide theoretical guarantees on the higher-order sketch estimate and the submatrix density measure. In summary, the main contributions of our paper are:
\begin{enumerate}
    \item \textbf{Higher-Order Sketch (Section \ref{sec:sketch})}: We transform the dense subgraph detection problem into finding a dense submatrix (which can be achieved in constant time) by extending the count-min sketch (CMS) data structure to a higher-order sketch.
    \item \textbf{Streaming Anomaly Detection (Sections \ref{sec:edge},\ref{sec:graph})}: We propose four novel online approaches to detect anomalous edges and graphs in real-time, with constant memory and  update time. Moreover, this is the first streaming work that incorporates dense subgraph search to detect graph anomalies in constant memory/time.
    \item \textbf{Effectiveness (Section \ref{sec:exp})}: We outperform all state-of-the-art streaming edge and graph anomaly detection methods on four real-world datasets.
\end{enumerate}

{\bf Reproducibility}: Our code and datasets are available on \textcolor{blue}{\href{https://github.com/Stream-AD/AnoGraph}{https://github.com/Stream-AD/AnoGraph}}.

% {\bf Reproducibility}: Our code and datasets are available on \href{https://github.com/Stream-AD/AnoGraph}{https://github.com/Stream-AD/AnoGraph}.

\begin{table*}[!tb]
\centering
\caption{Comparison of relevant anomaly detection approaches.}
\label{tab:comparison}
\resizebox{\linewidth}{!}{
\addtolength{\tabcolsep}{-2pt}
\begin{tabular}{@{}r|ccccc|ccc|c@{}}
\toprule
{Property} 
% & {\densestream}
% & {\sedanspot}
% & {MIDAS-R}
% & {PENminer}
% & {F-FADE}
% & {\densealert}
% & {\spotlight}
% & {\anomrank}
& {DenseStream}
& {SedanSpot}
& {MIDAS-R}
& {PENminer}
& {F-FADE}
& {DenseAlert}
& {SpotLight}
& {AnomRank}
& {\bf {Our Method}} \\
& (KDD'17) & (ICDM'20) & (AAAI'20) & (KDD'20) & (WSDM'21) & (KDD'17) & (KDD'18) & (KDD'19) \\\midrule
\textbf{Edge Anomaly} & \Checkmark & \Checkmark & \Checkmark & \Checkmark & \Checkmark & -- & -- & -- & \CheckmarkBold \\
\textbf{Graph Anomaly} & -- & -- & -- & -- & -- & \Checkmark & \Checkmark & \Checkmark & \CheckmarkBold \\
\textbf{Constant Memory} & -- & \Checkmark & \Checkmark & -- & \Checkmark & -- & \Checkmark & -- & \CheckmarkBold \\
\textbf{Constant Update Time} & -- & \Checkmark & \Checkmark & \Checkmark & \Checkmark & -- &  \Checkmark & -- & \CheckmarkBold \\
\textbf{Dense Subgraph Search} & \Checkmark & -- & -- & -- & -- & \Checkmark & -- & -- & \CheckmarkBold \\
\bottomrule
\end{tabular}}
\end{table*}

\section{Related Work}
Our work is closely related to areas like anomaly detection on graphs \citep{DBLP:conf/pakdd/ZhangLYFC19, DBLP:conf/sdm/BogdanovFMPRS13, 7836684, 10.1145/3139241, DBLP:journals/wias/BonchiBGS19, 7817049,Bojchevski2018BayesianRA,Yu2018NetWalkAF,Kumagai2021SemisupervisedAD,Liu2021AnomalyDI,Shao2018AnEF} and streams \citep{Bhatia2021MSTREAM,bhatia2022memstream,DBLP:conf/kdd/ManzoorLA18,tan2011fast,Jankov2017RealtimeHP,Zou2017NonparametricDO,Moshtaghi2015EvolvingFR, Siffer2017AnomalyDI,Togbe2020AnomalyDF,Zhang2020AugSplicingSB}, and streaming algorithms \citep{6544842,Wang2008ProcessingOM,Menon2007AnID, Zhao2011gSketchOQ,Shi2020HigherOrderCS}. Higher-order sketches are discussed in \citep{Shi2020HigherOrderCS}, however, they are restricted to count-sketches and non-graph settings.

\citep{chalapathy2019deep,pang2020deep,bhatia2021exgan} discuss deep learning based anomaly detection and \citep{Ma2020EfficientAF,Epasto2015EfficientDS,Sawlani2020NearoptimalFD,Mcgregor2015DensestSI,Esfandiari2018MetricSA} discus dense subgraph discovery, however, such approaches are unable to detect anomalies in a streaming manner. \citep{akoglu2010oddball,jiang2016catching,kleinberg1999authoritative,chakrabarti2004autopart,tong2011non,hooi2017graph,shin2018patterns} are limited to anomaly detection in static graphs. In this section, however, we limit our review only to methods detecting edge and graph anomalies in dynamic graphs; see \citep{akoglu2015graph} for an extensive survey.

{\bf Edge Stream Methods:} \textsc{HotSpot} \citep{yu2013anomalous} detects nodes whose egonets suddenly change.  RHSS \citep{ranshous2016scalable} focuses on sparsely-connected graph parts. CAD \citep{Sricharan} localizes anomalous changes using commute time distance measurement. More recently, \densestream\ \citep{shin2017densealert} maintains and updates a dense subtensor in a tensor stream. \sedanspot\ \citep{eswaran2018sedanspot} identifies edge anomalies based on edge occurrence, preferential attachment, and mutual neighbors. PENminer \citep{belth2020mining} explores the persistence of activity snippets, i.e., the length and regularity of edge-update sequences' reoccurrences. F-FADE \citep{chang2021f} aims to detect anomalous interaction patterns by factorizing their frequency. MIDAS \citep{bhatia2020midas,bhatia2022midas} identifies microcluster-based anomalies. However, all these methods are unable to detect graph anomalies.
	
{\bf Graph Stream Methods:} DTA/STA \citep{sun2006beyond} approximates the adjacency matrix of the current snapshot using matrix factorization. \textsc{Copycatch} \citep{beutel2013copycatch} spots near-bipartite cores where each node is connected to others in the same core densely within a short time. SPOT/DSPOT \citep{Siffer2017AnomalyDI} use extreme value theory to automatically set thresholds for anomalies. IncGM+ \citep{abdelhamid2017incremental} utilizes incremental method to process graph updates. More recently, \densealert\ identifies subtensors created within a short time and utilizes incremental method to process graph updates or subgraphs more efficiently. \spotlight\ \citep{eswaran2018spotlight} discovers anomalous graphs with dense bi-cliques, but uses a randomized approach without any search for dense subgraphs. \anomrank\ \citep{yoon2019fast}, inspired by PageRank \citep{page1999the}, iteratively updates two score vectors and computes anomaly scores. However, these methods are slow and do not detect edge anomalies. Moreover, they do not search for dense subgraphs in constant memory/time.

\section{Problem}
Let $\mathscr{E} = \{e_1, e_2, \cdots\}$ be a stream of weighted edges from a time-evolving graph $\mathcal{G}$. Each arriving edge is a tuple $e_i = (u_i, v_i, w_i, t_i)$ consisting of a source node $u_i \in \mathcal{V}$, a destination node $v_i \in \mathcal{V}$, a weight $w_i$, and a time of occurrence $t_i$, the time at which the edge is added to the graph. For example, in a network traffic stream, an edge $e_i$ could represent a connection made from a source IP address $u_i$ to a destination IP address $v_i$ at time $t_i$. We do not assume that the set of vertices $\mathcal{V}$ is known a priori: for example, new IP addresses or user IDs may be created over the course of the stream.

We model $\mathcal{G}$ as a directed graph. Undirected graphs can be handled by treating an incoming undirected edge as two simultaneous directed edges, one in each direction. We also allow $\mathcal{G}$ to be a multigraph: edges can be created multiple times between the same pair of nodes. Edges are allowed to arrive simultaneously: i.e. $t_{i+1} \ge t_i$, since in many applications $t_i$ is given as a discrete time tick.

The desired properties of our algorithm are as follows:

\begin{itemize}
\item {\bf Detecting Anomalous Edges:} To detect whether the edge is part of an anomalous subgraph in an online manner. Being able to detect anomalies at the finer granularity of edges allows early detection so that recovery can be started as soon as possible and the effect of malicious activities is minimized.

\item {\bf Detecting Anomalous Graphs:} To detect the presence of an unusual subgraph (consisting of edges received over a period of time) in an online manner, since such subgraphs often correspond to unexpected behavior, such as coordinated attacks. 

\item {\bf Constant Memory and Update Time:} To ensure scalability, memory usage and update time should not grow with the number of nodes or the length of the stream. Thus, for a newly arriving edge, our algorithm should run in constant memory and update time.

\end{itemize}

\section{Higher-Order Sketch \& Notations}
\label{sec:sketch}

% \begin{wrapfigure}{r}{0.3\columnwidth}
% % \begin{figure}[!htb]
%         % \center{\includegraphics[width=0.5\columnwidth]
%          \vspace{-25pt}
%         \centering
%         \includegraphics[width=0.3\columnwidth]{Fig/H-CMS.pdf} 
%         \caption{(a) Original CMS (b) Higher-order CMS}
%          \vspace{-15pt}
%         \label{fig:sketch}
% % \end{figure}
% \end{wrapfigure}

Count-min sketches (CMS) \citep{cormode2005improved} are popular streaming data structures used by several online algorithms \citep{Mcgregor2014GraphSA}. CMS uses multiple hash functions to map events to frequencies, but unlike a hash table uses only sub-linear space, at the expense of overcounting some events due to collisions. Frequency is approximated as the minimum over all hash functions. CMS, shown in Figure \ref{fig:sketch}(a), is represented as a two-dimensional matrix where each row corresponds to a hash function and hashes to the same number of buckets (columns).

% \begin{wrapfigure}{r}{0.3\columnwidth}
\begin{figure}[!b]
        % \center{\includegraphics[width=0.5\columnwidth]
        %  \vspace{-25pt}
        \centering
        \includegraphics[width=0.85\columnwidth]{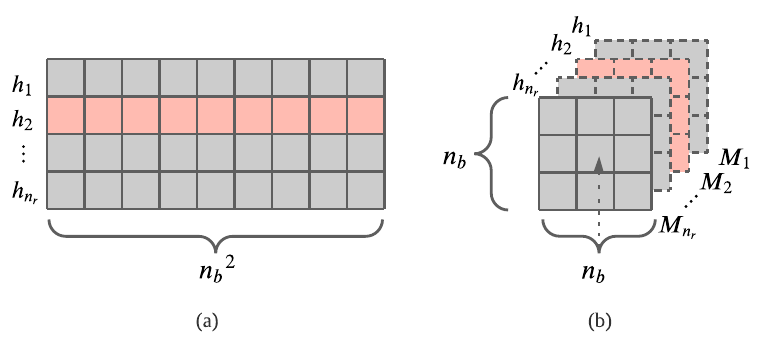} 
        \caption{(a) Original CMS with $n_b^{2}$ buckets for each hash function (b) Higher-order CMS with $n_b$ x $n_b$ buckets for each hash function.}
        %  \vspace{-15pt}
        \label{fig:sketch}
\end{figure}
% \end{wrapfigure}

\begin{figure}[!tb]
  \centering
  \includegraphics[width=0.76\columnwidth]{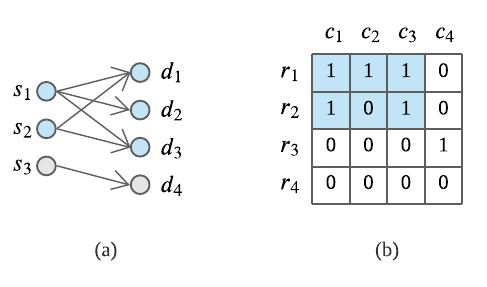}
  \caption{(a) Dense subgraph in the original graph between source nodes $s_1, s_2$, and destination nodes $d_1, d_2, d_3$ is transformed to a Dense submatrix between rows $r_1,r_2$, and columns $c_1, c_2, c_3$ in the H-CMS (b).}
  \label{fig:hcms}
\end{figure}

\begin{algorithm}[!hb]
    \DontPrintSemicolon
    \SetNoFillComment
    \SetKwFunction{procinit}{INITIALIZE H-CMS}
    \SetKwFunction{procreset}{RESET H-CMS}
    \SetKwFunction{procupdate}{UPDATE H-CMS}
    \SetKwFunction{procdecay}{DECAY H-CMS}
    
	\setcounter{AlgoLine}{0}
    \SetKwProg{myproc}{Procedure}{}{}
    \myproc{\procinit{$n_r$, $n_b$}}{
        \For{$r\gets1$ ... $n_r$}{
            $h_r: \mathcal{V} \rightarrow [0, n_b)$ \tcp*{hash vertex}
            $M_r \leftarrow [0]_{n_b \times n_b}$
        }
    }
    \setcounter{AlgoLine}{0}
    \SetKwProg{myproc}{Procedure}{}{}
    \myproc{\procreset{$n_r$, $n_b$}}{
        \For{$r\gets1$ ... $n_r$}{
            $\mathcal{M}_r \leftarrow [0]_{n_b \times n_b}$ \tcp*{reset to zero matrix} 
        }
    }
    \setcounter{AlgoLine}{0}
    \SetKwProg{myproc}{Procedure}{}{}
    \myproc{\procupdate{$u, v, w$}}{
        \For{$r\gets1$ ... $n_r$} {
            $\mathcal{M}_r[h_r(u)][h_r(v)] \leftarrow \mathcal{M}_r[h_r(u)][h_r(v)] + w$
        }
    }
    \setcounter{AlgoLine}{0}
    \SetKwProg{myproc}{Procedure}{}{}
    \myproc{\procdecay{$\alpha$}}{
        \For{$r\gets1$ ... $n_r$}{
            $\mathcal{M}_r \leftarrow \alpha * \mathcal{M}_r$ \tcp*{decay factor: $\alpha$ }
        }
    }
    
	\caption{H-CMS Operations}
	\label{alg:hcmsoperations}
\end{algorithm}

We introduce a Higher-order CMS (H-CMS) data structure where each hash function maps multi-dimensional input to a generic tensor instead of mapping it to a row vector. H-CMS enhances CMS by separately hashing the individual components of an entity thereby maintaining more information. Figure \ref{fig:sketch}(b) shows a 3-dimensional H-CMS that can be used to hash two-dimensional entities such as graph edges to a matrix. The source node is hashed to the first dimension and the destination node to the other dimension of the sketch matrix, as opposed to the original CMS that will hash the entire edge to a one-dimensional row vector (Figure \ref{fig:sketch}(a)).

We use a 3-dimensional H-CMS (operations described in Algorithm \ref{alg:hcmsoperations}) where the number of hash functions is denoted by $n_r$, and matrix $\mathcal{M}_j$ corresponding to $j$-th hash function $h_j$ is of dimension $n_b \times n_b$, i.e., a square matrix. For each $j \in [n_r]$, the $j$-th hash function denoted by $h_{j}(u,v)$ maps an edge $(u, v)$ to a matrix index $(h'_{j}(u), h''_{j}(v))$, i.e., the source node is mapped to a row index and the destination node is mapped to a column index. That is, $h_{j}(u,v)=(h'_{j}(u), h''_{j}(v))$. Therefore, each matrix in a 3-dimensional H-CMS captures the essence of a graph adjacency matrix. Dense subgraph detection can thus be transformed into a dense submatrix detection problem (as shown in Figure \ref{fig:hcms}) where the size of the matrix is a small constant, independent of the number of edges or the graph size.

For any $(u,v)$, let $y(u,v)$ be the true count of $(u,v)$ observed thus far and $\hat y(u,v)=\min_{j \in [n_r]} \mathcal{M}_j[h_j'(u)][h_j''(v)]$ be the estimate of the count via the 3-dimensional H-CMS. Since the H-CMS can overestimate the count by possible collisions (but not underestimate because we update and keep all the counts for every hash function), we have $y(u,v)\le \hat y(u,v)$. We define $M$ to be  the number of all observations so far; i.e., $M=\sum_{u,v} y(u,v)$.
The following theorem shows that the 3-dimensional H-CMS has the  estimate guarantees similarly to  the CMS:
\begin{theorem} \label{thm:1}
(Proof in Appendix \ref{app:proofshcms})
For all $k \in [n_r]$, let $h_k(u,v) = (h'_k(u), h''_k(v))$ where each of hash functions $h'_k$ and $h''_k$ is chosen uniformly at random from a pairwise-independent family. Here, we allow both cases of $h'=h''$ and $h'\neq h''$. Fix $\delta>0$ and set  $n_r = \ceil*{\ln \frac{1}{\delta}}$ and  $n_b^{}=\ceil{\frac{e}{\epsilon}}$.  Then, with probability at least $1-\delta$, $ \hat y(u,v)\le y(u,v)+\epsilon M$.
\end{theorem}

Theorem \ref{thm:1} shows that we have the estimate guarantee even if we use the same hash function for both  the source nodes and the destination node (i.e., $h'=h''$). Thus, with abuse of notation, we write $h(u,v)=(h(u), h(v))$ when $h'=h''$ by setting $h=h'=h''$ on the right-hand side. On the other hand, in the case of $h' \neq h''$, it would be possible to improve the estimate guarantee in Theorem \ref{thm:1}. For example, if we can make $h$ to be chosen uniformly at random from a  weakly universal set of hash functions (by defining corresponding families of distributions for  $h'$ and  $h''$ under some conditions), then we can set $n_b=\ceil{\sqrt{\frac{e}{\epsilon}}}$ to have the same estimate guarantee as that of  Theorem \ref{thm:1} based on the proof of Theorem \ref{thm:1}. The analysis for such a potential improvement is left for future work as an open problem.

Frequently used symbols are discussed in Table \ref{tab:symbols}, and we leverage the subgraph density measure discussed in \citep{khuller2009finding} to define the submatrix $(S_x, T_x)$ density.

\begin{definition}
Given matrix $\mathcal{M}$, density of a submatrix of $\mathcal{M}$ represented by $S_x \subseteq S$ and $T_{x} \subseteq T$, is: 
\begin{equation}
\mathcal{D}(\mathcal{M}, S_x, T_x) = \frac{\sum_{s \in S_x}\sum_{t \in T_x}\mathcal{M}[s][t]}{\sqrt{|S_x||T_x|}}
\end{equation}
\label{def:density}
\end{definition}	

% Algorithm~\ref{alg:H-CMS} summarises the H-CMS operations.

% 	\begin{algorithm}[!htb]
% 		\caption{H-CMS Operations}
% 		\label{alg:H-CMS}
% 		\begin{algorithmic}[1]
% 			\Function{Hash}{H-CMS $H$, edge $u,v$}
% 				\State \Return $\{h_i(i),h_j(v)|h_i\in S, h_j in H\}$ \Comment{Hash functions $h_i, h_j$}
% 			\EndFunction
% 			\Function{Add}{H-CMS $H$, indices $I,J$}
% 				\State $H_{IJ}\gets S_{IJ}+1$
% 			\EndFunction
% 			\Function{Query}{H-CMS $H$, indices $I,J$}
% 				\State \Return $\min_{i\in I,j\in J}(H_{ij})$
% 			\EndFunction
% 			\Function{ArgQuery}{H-CMS $H$, indices $I,J$}
% 				\State \Return $\arg\min_{i\in I,j\in J}(H_{ij})$
% 			\EndFunction
% 		\end{algorithmic}
% 	\end{algorithm}

% \begin{wraptable}{r}{10cm}
\begin{table}[!tb]
% \vspace{-10pt}
\centering
\caption{Table of symbols.}
\label{tab:symbols}
\begin{tabular}{>{\centering\arraybackslash}p{2.2cm}|p{6cm}}
\toprule
\textbf{Symbol} & \textbf{Definition} \\
\midrule
$n_r$ & number of hash functions \\
$n_b$ & number of buckets \\
$h(u)$ & hash function $u \rightarrow [0, n_b)$\\ \midrule
$\mathcal{M}$ & a square matrix of dimensions $n_b \times n_b$\\
$\mathcal{M}[i][j]$ & element at row index i and column index j\\ \midrule
$S$ & set of all row indices\\
$S_{cur}$ & set of current submatrix row indices \\
$S_{rem}$ & set of remaining row indices \\
% & i.e. indices not part of current submatrix \\
$T$ & set of all column indices \\
$T_{cur}$ & set of current submatrix column indices \\
$T_{rem}$ & set of remaining column indices \\
% & i.e. indices not part of current submatrix \\ 
$[z]$ & set of all integers in the range $[1, z]$ \\\midrule
$\mathcal{D}(\mathcal{M}, S_x, T_x)$ & density of submatrix ($S_x$, $T_x$) \\ 
$\mathcal{E}(\mathcal{M}, S_x, T_x)$ & sum of elements of submatrix ($S_x$, $T_x$)\\ 
$\mathcal{R}(\mathcal{M}, u, T_x)$ & submatrix row-sum \\
& i.e. sum of elements of submatrix ($\{u\}$, $T_x$) \\ 
$\mathcal{C}(\mathcal{M}, S_x, v)$ & submatrix column-sum \\
& i.e. sum of elements of submatrix ($S_x$, $\{v\}$)\\ 
$\mathcal{L}(\mathcal{M}, u, v, S_x, T_x)$ & likelihood of index $(u, v)$ w.r.t. submatrix $(S_x, T_x)$ \\ 
$d_{max}$ & maximum reported submatrix density \\
\bottomrule
\end{tabular}
% \vspace{-4mm}
\end{table}
% \end{wraptable}

% \paragraph{\textbf{Matrix and Submatrix Notations:}} Consider a square matrix $\mathcal{M}$ where matrix element at index $(i, j)$ is denoted by $\mathcal{M}[i][j]$. We represent a submatrix of matrix $\mathcal{M}$ as ($S_{cur}, T_{cur}$), where $S_{cur}$ denotes the row indices and $T_{cur}$ denotes column indices that form a submatrix. $S$ represents set of all row indices of matrix $\mathcal{M}$, and similarly $T$ represents all column indices. Row indices that are not part of submatrix $(S_{cur}, T_{cur})$ are represented by $S_{rem}$, and likewise remaining column indices are represented by $T_{rem}$. For brevity, we utilize $[z]$ to represent set of all integers in range $[1, z]$.

% We denote sum of elements of submatrix $(S_x, T_x)$ by $\mathcal{E}(\mathcal{M}, S_x, T_x)$. We represent submatrix with a single row (say $u \in S$) as $(\{u\}, T_x)$ and denote the sum of its elements by $\mathcal{R}(\mathcal{M}, u, T_x)$, we call it submatrix row sum. We similarly represent submatrix column sum with column $v$ by $\mathcal{C}(\mathcal{M}, S_x, v)$.
% \begin{equation}
%     \mathcal{E}(\mathcal{M}, S_x, T_x) = \sum_{s_p \in S_x}\sum_{t_p \in T_x}\mathcal{M}[s_p][t_p]
% \end{equation}
% \begin{equation}
%     \mathcal{R}(\mathcal{M}, u, T_x) = \sum_{t_p \in T_x}\mathcal{M}[u][t_p]
% \end{equation}
% \begin{equation}
%     \mathcal{C}(\mathcal{M}, S_x, v) = \sum_{s_p \in S_x}\mathcal{M}[s_p][v]
% \end{equation}

% \vspace{-6mm}
\section{Edge Anomalies}
\label{sec:edge}

In this section, using the H-CMS data structure, we propose \methodedge-G and \methodedge-L to detect edge anomalies by checking whether the received edge when mapped to a sketch matrix element is part of a dense submatrix. \methodedge-G finds a \textbf{G}lobal dense submatrix and \methodedge-L maintains and updates a \textbf{L}ocal dense submatrix around the matrix element.

\subsection{\methodedge-G}

\methodedge-G, as described in Algorithm \ref{alg:AnoEdge-G}, maintains a \emph{temporally decaying} H-CMS, i.e. whenever 1 unit of time passes, we multiply all the H-CMS counts by a fixed factor $\alpha$ (lines 2,4). This decay simulates the gradual `forgetting' of older, and hence, more outdated information. When an edge $(u, v)$ arrives, $u$, $v$ are mapped to matrix indices $h(u)$, $h(v)$ respectively for each hash function $h$, and the corresponding H-CMS counts are updated (line 5). \textsc{Edge-Submatrix-Density} procedure (described below) is then called to compute the density of a dense submatrix around $(h(u), h(v))$. Density is reported as the anomaly score for the edge; a larger density implies that the edge is more likely to be anomalous.

\textsc{Edge-Submatrix-Density} procedure calculates the density of a dense submatrix around a given index $(h(u), h(v))$. A $1 \times 1$ submatrix represented by $S_{cur}$ and $T_{cur}$, is initialized with row-index $h(u)$ and column index $h(v)$ (line 9). The submatrix is iteratively expanded by greedily selecting a row $u_p$ from $S_{rem}$ (or a column $v_p$ from $T_{rem}$) that obtains the maximum row (or column) sum with the current submatrix (lines 11,12). This selected row $u_p$ (or column $v_p$) is removed from $S_{rem}$ (or $T_{rem}$), and added to $S_{cur}$ (or $T_{cur}$) (lines 14,16). The process is repeated until both $S_{rem}$ and $T_{rem}$ are empty (line 10). Density of the current submatrix is computed at each iteration of the submatrix expansion process and the maximum over all greedily formed submatrix densities is returned (lines 17,18).

\begin{algorithm}[!htb]
\caption{\methodedge-G Scoring}
\label{alg:AnoEdge-G}
    \DontPrintSemicolon
    \SetNoFillComment
    \SetKwFunction{algo}{algo}\SetKwFunction{procf}{\textsc{Edge-Submatrix-Density}}\SetKwFunction{procs}{\textsc{AnoEdge-G}}

    % \KwInput{a matrix $\mathcal{M} \in \mathbb{R}^{n_b \times n_b}$, row-index $u \in [0, n_b)$, column-index $v \in [0, n_b)$}
    % \KwOutput{a dense submatrix density $e_{max}$}
    
    \KwInput{Stream $\mathscr{E}$ of edges over time}
    \KwOutput{Anomaly score per edge}
    
    \setcounter{AlgoLine}{0}
    \SetKwProg{myproc}{Procedure}{}{}
    \myproc{\procs{$\mathscr{E}$}}{
    Initialize H-CMS matrix $\mathcal{M}$ for edge count \\
    \While{new edge $e = (u, v, w, t) \in \mathscr{E}$ is received} {
        \tcc{decay count}
        Temporal decay H-CMS with timestamp change \\
        Update H-CMS matrix $\mathcal{M}$ for new edge $(u, v)$ with value $w$ \tcp*{update count}
        %$ density \leftarrow$ \textsc{Edge-Submatrix-Density}($\mathcal{M}, h(u), h(v)$) \tcp*{compute density}
        \textbf{output} $score(e) \leftarrow$ \textsc{Edge-Submatrix-Density}($\mathcal{M}, h(u), h(v)$) %\tcp*{anomaly score}
    }
    }
    
    \SetKwProg{myproc}{Procedure}{}{}
    \myproc{\procf{$\mathcal{M}$, $u$, $v$}}{
    $S \leftarrow [n_b]; \enspace T \leftarrow [n_b]; \enspace S_{cur} \leftarrow \{u\}; \enspace T_{cur} \leftarrow \{v\}; \enspace S_{rem} \leftarrow S/\{u\}; \enspace T_{rem} \leftarrow T/\{v\}$ \;
    $d_{max} \leftarrow \mathcal{D}(\mathcal{M}, S_{cur}, T_{cur})$ \;
    \While{$S_{rem} \neq \emptyset \enspace \vee \enspace T_{rem} \neq \emptyset$} {
        \tcc{submatrix max row-sum index}
        $u_p \leftarrow \operatorname*{argmax}_{s_p \in S_{rem}} \mathcal{R}(\mathcal{M}, s_p, T_{cur})$ \\
        \tcc{submatrix max column-sum index}
        $v_p \leftarrow \operatorname*{argmax}_{t_p \in T_{rem}} \mathcal{C}(\mathcal{M}, S_{cur}, t_p)$ \\
        \If{$\mathcal{R}(\mathcal{M}, u_p, T_{cur}) > \mathcal{C}(\mathcal{M}, S_{cur}, v_p)$} {
            $S_{cur} \leftarrow S_{cur}\cup\{u_p\}; \enspace S_{rem} \leftarrow S_{rem}/\{u_p\} $\;
        } 
        \Else {
             $T_{cur} \leftarrow T_{cur}\cup\{v_p\}; \enspace T_{rem} \leftarrow T_{rem}/\{v_p\} $\;
        }
        $d_{max} \leftarrow max(d_{max}, \mathcal{D}(\mathcal{M}, S_{cur}, T_{cur}))$ \;
    }
    % \While{$S_{rem} \neq \emptyset$} {
    %     $u_p \leftarrow \operatorname*{argmax}_{s_p \in S_{rem}} \mathcal{R}(\mathcal{M}, s_p, T_{cur})$ \;
    %     $S_{cur} \leftarrow S_{cur}\cup\{u_p\}; \enspace S_{rem} \leftarrow S_{rem}/\{u_p\} $ \;
    %     $e_{max} \leftarrow max(e_{max}, \mathcal{D}(\mathcal{M}, S_{cur}, T_{cur}))$ \;
    % }
    % \While{$T_{rem} \neq \emptyset$}{
    %     $v_p \leftarrow \operatorname*{argmax}_{t_p \in T_{rem}} \mathcal{C}(\mathcal{M}, S_{cur}, t_p)$ \;
    %     $T_{cur} \leftarrow T_{cur}\cup\{v_p\}; \enspace T_{rem} \leftarrow T_{rem}/\{v_p\} $ \;
    %     $e_{max} \leftarrow max(e_{max}, \mathcal{D}(\mathcal{M}, S_{cur}, T_{cur}))$ \;
    % }
    \KwRet $d_{max}$ \tcp*{dense submatrix density}} 

\end{algorithm}

\begin{proposition}\label{thm:AnoEdge-G-time}
(Proof in Appendix \ref{app:proofsAnoEdge-G})
Time complexity of Algorithm \ref{alg:AnoEdge-G} is $O(|\mathscr{E}|*n_r*n_b^2)$ \footnote{This is for processing all edges; the time per edge is constant.}. Memory complexity of Algorithm \ref{alg:AnoEdge-G} is $O(n_r*n_b^2)$.
\end{proposition}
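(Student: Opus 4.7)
The plan is to bound the per-edge cost and multiply by $|\mathscr{E}|$. For each arriving edge, Algorithm~\ref{alg:AnoEdge-G} does three things: it temporally decays the H-CMS, updates it with the new edge $(u,v,w)$, and scores the edge by calling \textsc{Edge-Submatrix-Density} on the matrix associated with each of the $n_r$ hash functions. I would bound each of these separately and then add.

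The decay step multiplies every one of the $n_b^2$ entries in each of the $n_r$ matrices by $\alpha$, costing $O(n_r \cdot n_b^2)$. The count update hashes $u$ and $v$ under each hash function and increments a single entry per matrix, costing $O(n_r)$. These are routine and already contribute the claimed factors.

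The main obstacle is bounding the cost of one call to \textsc{Edge-Submatrix-Density}. Its while loop starts with $|S_{cur}|=|T_{cur}|=1$ and strictly shrinks $|S_{rem}|+|T_{rem}|$ by exactly one per iteration, so it executes at most $2n_b-2 = O(n_b)$ times. A naive reading of each iteration would recompute from scratch $|S_{rem}|$ row-sums $\mathcal{R}(\mathcal{M}, s_p, T_{cur})$ (each of size $|T_{cur}|$) and $|T_{rem}|$ column-sums $\mathcal{C}(\mathcal{M}, S_{cur}, t_p)$ (each of size $|S_{cur}|$), giving $O(n_b^2)$ work per iteration and $O(n_b^3)$ in total. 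The key step in tightening the analysis to $O(n_b^2)$ is to maintain the row-sum array, the column-sum array, and the running total $\mathcal{E}(\mathcal{M}, S_{cur}, T_{cur})$ incrementally. When a new row $u_p$ is appended to $S_{cur}$, each of the $O(n_b)$ stored column-sums needs only the single entry $\mathcal{M}[u_p][t_p]$ added, which is an $O(n_b)$ sweep; the row-sums are unaffected, and the running total grows by $\mathcal{R}(\mathcal{M}, u_p, T_{cur})$, already computed. Appending a new column is symmetric. With these maintained arrays, each argmax inside the loop is a single $O(n_b)$ scan, and $\mathcal{D}(\mathcal{M}, S_{cur}, T_{cur})$ is read off in $O(1)$ from the running sum together with $|S_{cur}|$ and $|T_{cur}|$. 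Each of the $O(n_b)$ iterations therefore costs $O(n_b)$, yielding $O(n_b^2)$ per call.

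Combining: per edge we pay $O(n_r \cdot n_b^2)$ for decay, $O(n_r)$ for the update, and $n_r \cdot O(n_b^2) = O(n_r \cdot n_b^2)$ across all hash functions for scoring, giving a per-edge cost of $O(n_r \cdot n_b^2)$. Multiplying by the stream length produces the stated $O(|\mathscr{E}| \cdot n_r \cdot n_b^2)$ bound, and because $n_r$ and $n_b$ are constants fixed in advance this is $O(1)$ per edge, justifying the footnote.
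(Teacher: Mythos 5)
Your proposal is correct and follows essentially the same route as the paper's proof: bound \textsc{Edge-Submatrix-Density} at $O(n_b^2)$ by noting the loop runs at most $2n_b-2$ times and by maintaining row-sums, column-sums, and the running submatrix total incrementally so each iteration costs $O(n_b)$, then multiply by $n_r$ and $|\mathscr{E}|$. The only (harmless) difference is that you charge the $O(n_r n_b^2)$ decay to every edge, whereas the paper observes it is only triggered on timestamp changes; both accountings yield the stated bound.
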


\subsection{\methodedge-L}
Inspired by Definition \ref{def:density}, we define the likelihood measure of a matrix index $(h(u), h(v))$ with respect to a submatrix $(S_x, T_x)$, as the sum of the elements of submatrix $(S_x, T_x)$ that either share row with index $h''(v)$ or column with index $h'(u)$ divided by the total number of such elements.
\begin{definition}
Given matrix $\mathcal{M}$, likelihood of an index $h(u, v)$ with respect to a submatrix represented by $S_x \subseteq S$ and $T_{x} \subseteq T$, is:
\begin{equation}
\mathcal{L}(\mathcal{M}, u, v, S_x, T_x) = \frac{\sum_{(s, t) \; \in \; \; S_x \times \{h(v)\} \; \cup \; \{h(u)\} \times {T_x}}\mathcal{M}[s][t]}{|S_x \times \{h(v)\} \; \cup \; \{h(u)\} \times {T_x}|}
\end{equation}
\label{def:likelihood}
\end{definition}

% \vspace{-4mm}
\methodedge-L, as described in Algorithm \ref{alg:AnoEdge-L}, maintains a temporally decaying H-CMS to store the edge counts. We also initialize a mutable submatrix of size $1 \times 1$ with a random element, and represent it as $(S_{cur}, T_{cur})$. As we process edges, we greedily update $(S_{cur}, T_{cur})$ to maintain it as a dense submatrix. When an edge arrives, H-CMS counts are first updated, and the received edge is then used to check whether to \emph{expand} the current submatrix (line 7). If the submatrix density increases upon the addition of the row (or column), then the row-index $h(u)$ (or column-index $h(v)$) is added to the current submatrix, $(S_{cur}, T_{cur})$. To remove the row(s) and column(s) decayed over time, the process iteratively selects the row (or column) with the minimum row-sum (or column-sum) until removing it increases the current submatrix density. This ensures that the current submatrix is as \emph{condensed} as possible (line 9). As defined in Definition \ref{def:likelihood}, \methodedge-L computes the likelihood score of the edge with respect to $(S_{cur}, T_{cur})$ (line 10). A higher likelihood measure implies that the edge is more likely to be anomalous.

\begin{algorithm}[!htb]
\caption{\methodedge-L Scoring}
    \label{alg:AnoEdge-L}
    \DontPrintSemicolon
    \SetNoFillComment
    \SetKwFunction{algo}{algo}\SetKwFunction{procs}{\textsc{AnoEdge-L}}
    
    \KwInput{Stream $\mathscr{E}$ of edges over time}
    \KwOutput{Anomaly score per edge}
    \setcounter{AlgoLine}{0}
    \SetKwProg{myproc}{Procedure}{}{}
    \myproc{\procs{$\mathscr{E}$}}{
    Initialize H-CMS matrix $\mathcal{M}$ for edges count \\
    \tcc{mutable submatrix}
    Initialize a randomly picked $1 \times 1$ submatrix $(S_{cur}, T_{cur})$ \\
    \While{new edge $e = (u, v, w, t) \in \mathscr{E}$ is received} {
        \tcc{decay count}
        Temporal decay H-CMS with timestamp change \\
        Update H-CMS matrix $\mathcal{M}$ for new edge $(u, v)$ with value $w$ \tcp*{update count}
        \textbf{$\triangleright$ Check and Update Submatrix:} \;
        Expand $(S_{cur}, T_{cur})$ \tcp*{expand submatrix}
        Condense $(S_{cur}, T_{cur})$ \tcp*{condense submatrix}
        \tcc{likelihood score from Definition \ref{def:likelihood}}
        \textbf{output} $score(e) \leftarrow \mathcal{L}(\mathcal{M}, h(u), h(v), S_{cur}, T_{cur})$ 
    }
    }
    
\end{algorithm}

\begin{proposition}\label{thm:AnoEdge-L-time}
(Proof in Appendix \ref{app:proofsAnoEdge-L}) Time complexity of Algorithm \ref{alg:AnoEdge-L} is $O(n_r*n_b^2 + |\mathscr{E}|*n_r*n_b)$. Memory complexity of Algorithm \ref{alg:AnoEdge-L} is $O(n_r*n_b^2)$.
\end{proposition}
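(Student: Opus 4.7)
The plan is to decompose the runtime of Algorithm \ref{alg:AnoEdge-L} into a one-time initialization cost plus a per-edge cost, and to show that the initialization is $O(n_r n_b^2)$ while the per-edge cost is $O(n_r n_b)$ (with the Condense step analyzed in an amortized sense). Summing over the stream then yields the claimed bound $O(n_r n_b^2 + |\mathscr{E}| n_r n_b)$.

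First I would handle the easy pieces. Allocating and zero-initializing the H-CMS of shape $n_r \times n_b \times n_b$, together with the initial $1\times 1$ submatrices $(S_{cur}, T_{cur})$ for each hash function, costs $O(n_r n_b^2)$, contributing the first term. For each incoming edge, the temporal decay and count update touch one cell per hash function (using a lazy/per-bucket timestamp to avoid sweeping every bucket at each tick), so they cost $O(n_r)$ per edge. For the Expand step, for each of the $n_r$ hash functions we only need to test whether inserting the single row $h(u)$ or column $h(v)$ of the arriving edge raises $\mathcal{D}(\mathcal{M}, S_{cur}, T_{cur})$; this requires summing at most $|T_{cur}| \le n_b$ and $|S_{cur}| \le n_b$ entries, giving $O(n_r n_b)$ per edge. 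I would also maintain, per hash function, an array of the current row-sums $\mathcal{R}(\mathcal{M}, s, T_{cur})$ for $s \in S_{cur}$ and column-sums $\mathcal{C}(\mathcal{M}, S_{cur}, t)$ for $t \in T_{cur}$; updating these arrays after an insertion or deletion of a single row/column costs $O(n_b)$.

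The main obstacle, and the only non-trivial bookkeeping, is bounding the Condense step. A naive worst case is $\Theta(n_b)$ removals in a single edge, each costing $\Theta(n_b)$, which would blow up to $n_b^2$ per edge. The key observation is amortized: per edge and per hash function, Expand inserts at most one row and one column, so across the entire stream the total number of insertions into any one submatrix is at most $2|\mathscr{E}|$. Since Condense only removes rows/columns that are currently in $(S_{cur}, T_{cur})$, the total number of removals across the stream is bounded by the total number of insertions plus the initial size, i.e.\ $O(|\mathscr{E}|)$ per hash function. With the maintained row-sum/column-sum arrays, each removal (finding the argmin row or column, checking the density change, and updating the sums) costs $O(n_b)$.

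Putting the pieces together, the amortized Condense work over the whole stream is $O(|\mathscr{E}| n_r n_b)$, and the other per-edge costs (decay, update, Expand, and the likelihood output in line 10) are also $O(n_r n_b)$ per edge. Adding the $O(n_r n_b^2)$ initialization term yields the claimed time complexity $O(n_r n_b^2 + |\mathscr{E}| n_r n_b)$, completing the proof.
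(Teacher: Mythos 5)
Your proof is correct and follows essentially the same route as the paper's: a one-time $O(n_r n_b^2)$ initialization, $O(n_r n_b)$ per-edge costs for update, expand, and the likelihood score, and an amortized charge of condense removals against expand insertions to bound the total removal work by $O(|\mathscr{E}| n_r n_b)$. Your write-up is in fact somewhat more explicit than the paper's about the amortization (the paper only remarks that condense removes rows/columns "that were once added," which is $O(|\mathscr{E}|)$ in the worst case) and about lazily applying the temporal decay, but these are refinements of the same argument rather than a different approach.
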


\section{Graph Anomalies}
\label{sec:graph}
We now propose \methodgraph\ and \methodgraph-K to detect graph anomalies by first mapping the graph to a higher-order sketch, and then checking for a dense submatrix. These are the first streaming algorithms that make use of dense subgraph search to detect graph anomalies in constant memory and time. \methodgraph\ greedily finds a dense submatrix with a 2-approximation guarantee on the density measure. \methodgraph-K leverages \textsc{Edge-Submatrix-Density} from Algorithm \ref{alg:AnoEdge-G} to greedily find a dense submatrix around \textbf{$K$} strategically picked matrix elements performing equally well in practice.

\subsection{\methodgraph}
\methodgraph, as described in Algorithm \ref{alg:AnoGraph}, maintains an H-CMS to store the edge counts that are reset whenever a new graph arrives. The edges are first processed to update the H-CMS counts. \textsc{\methodgraph-Density} procedure (described below) is then called to find the dense submatrix. \methodgraph\ reports anomaly score as the density of the detected (dense) submatrix; a larger density implies that the graph is more likely to be anomalous.

\textsc{\methodgraph-Density} procedure computes the density of a dense submatrix of matrix $\mathcal{M}$. The current dense submatrix is initialised as matrix $\mathcal{M}$ and then the row (or column) from the current submatrix with minimum row (or column) sum is greedily removed. This process is repeated until $S_{cur}$ and $T_{cur}$ are empty (line 11). The density of the current submatrix is computed at each iteration of the submatrix expansion process and the maximum over all densities is returned (lines 18, 19). 

Algorithm \ref{alg:AnoGraph} is a special case of finding the densest subgraph in a directed graph problem \citep{khuller2009finding} where the directed graph is represented as an adjacency matrix and detecting the densest subgraph essentially means detecting dense submatrix. We now provide a guarantee on the density measure.

% \begin{theorem}\label{thm:2approx}
% (Proof in Appendix \ref{app:proofsAnoGraph}) Algorithm \ref{alg:AnoGraph} achieves a 2-approximation guarantee for the densest submatrix problem.
% \end{theorem}

\begin{lemma}\label{lemma:2approx} Let $S^*$ and $T^*$ be the optimum densest sub-matrix solution of $\mathcal{M}$ with density $\mathcal{D}(\mathcal{M}, S^*, T^*) = d_{opt}$. Then $\forall u \in S^*$ and $\forall v \in T^*$,
\begin{equation}
    \mathcal{R}(\mathcal{M}, u, T^*) \ge \tau_{S^*}; \quad \mathcal{C}(\mathcal{M}, S^*, v) \ge \tau_{T^*}
\end{equation}
\begin{conditions}
    $\tau_{S^*}$ & $\mathcal{E}(\mathcal{M}, S^*, T^*)\left(1- \sqrt{1 - \frac{1}{|S^*|}}\right)$, \\
    $\tau_{T^*}$ & $\mathcal{E}(\mathcal{M}, S^*, T^*)\left(1- \sqrt{1 - \frac{1}{|T^*|}}\right)$ \\
\end{conditions}
\end{lemma}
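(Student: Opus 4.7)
The plan is to argue by contradiction, using the optimality of $(S^*, T^*)$. Write $E = \mathcal{E}(\mathcal{M}, S^*, T^*)$ so that $d_{opt} = E/\sqrt{|S^*|\,|T^*|}$. Suppose, toward a contradiction, that some row $u \in S^*$ satisfies $\mathcal{R}(\mathcal{M}, u, T^*) < \tau_{S^*}$. I would then consider the candidate submatrix obtained by deleting row $u$, namely $S' = S^* \setminus \{u\}$ with the same column set $T^*$, and show that its density strictly exceeds $d_{opt}$, contradicting optimality.

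Concretely, the sum of elements of the reduced submatrix is $E - \mathcal{R}(\mathcal{M}, u, T^*)$, and by the assumed row-sum bound this is strictly larger than
\[
E - \tau_{S^*} \;=\; E \cdot \sqrt{1 - \tfrac{1}{|S^*|}} \;=\; E \cdot \sqrt{\tfrac{|S^*|-1}{|S^*|}} .
\]
Dividing by $\sqrt{(|S^*|-1)\,|T^*|}$ gives
\[
\mathcal{D}(\mathcal{M}, S', T^*) \;>\; \frac{E \sqrt{(|S^*|-1)/|S^*|}}{\sqrt{(|S^*|-1)\,|T^*|}} \;=\; \frac{E}{\sqrt{|S^*|\,|T^*|}} \;=\; d_{opt},
\]
which contradicts the assumed optimality of $(S^*, T^*)$. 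Hence $\mathcal{R}(\mathcal{M}, u, T^*) \ge \tau_{S^*}$ for every $u \in S^*$. The bound on $\mathcal{C}(\mathcal{M}, S^*, v)$ for $v \in T^*$ follows by a symmetric argument, deleting the offending column from $T^*$ and using the definition of $\tau_{T^*}$ with $|T^*|$ in place of $|S^*|$.

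The only non-routine step is the algebraic identity $1 - (1 - \sqrt{1 - 1/|S^*|}) = \sqrt{(|S^*|-1)/|S^*|}$ that makes the $\sqrt{|S^*|-1}$ factor in the denominator cancel exactly with the surplus mass in the numerator; this is where the particular form of $\tau_{S^*}$ is tuned so that the contradiction closes cleanly. I do not anticipate any real obstacle beyond keeping track of this cancellation, since the rest is a direct application of the density definition in Definition \ref{def:density} together with the optimality hypothesis on $(S^*, T^*)$.
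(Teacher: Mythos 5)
Your proposal is correct and follows essentially the same route as the paper's proof: assume a row of $S^*$ has row-sum below $\tau_{S^*}$, delete it, and show the density of $(S^*\setminus\{u\}, T^*)$ strictly exceeds $d_{opt}$, with the symmetric argument for columns. The only difference is that you spell out the algebraic cancellation $E-\tau_{S^*}=E\sqrt{(|S^*|-1)/|S^*|}$ explicitly, which the paper leaves implicit.
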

\begin{proof}
Leveraging the proof from \citep{khuller2009finding}, let's assume that $\exists u \in S^*$ with $\mathcal{R}(\mathcal{M}, u, T^*) < \tau_{S^*}$. Density of submatrix after removing $u = \frac{\mathcal{E}(\mathcal{M}, S^*, T^*) - \mathcal{R}(\mathcal{M}, u, T^*)}{\sqrt{(|S^*-1|)|T^*|}}$ which is greater than $\frac{\mathcal{E}(\mathcal{M}, S^*, T^*) - \tau_{S^*}}{\sqrt{(|S^*-1|)|T^*|}}=d_{opt}$, and that is not possible. Hence, $\mathcal{R}(\mathcal{M}, u, T^*) \ge \tau_{S^*}$. $\mathcal{C}(\mathcal{M}, S^*, v) \ge \tau_{T^*}$ can be proved in a similar manner.
\end{proof}

\begin{theorem}\label{thm:2approx-supp}
\textsc{\methodgraph-Density} procedure in Algorithm \ref{alg:AnoGraph} achieves a 2-approximation guarantee for the densest submatrix problem.
\end{theorem}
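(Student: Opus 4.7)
The plan is to identify a single iteration of \textsc{\methodgraph-Density} at which the intermediate submatrix already has density at least $d_{opt}/2$; since Algorithm \ref{alg:AnoGraph} returns the maximum density across all iterations, exhibiting one such iteration will suffice. Writing $\mathcal{E}^{*} := \mathcal{E}(\mathcal{M}, S^*, T^*)$ so that $d_{opt} = \mathcal{E}^{*}/\sqrt{|S^*||T^*|}$, the entire argument is driven by the row/column lower bounds $\tau_{S^*}$ and $\tau_{T^*}$ supplied by Lemma \ref{lemma:2approx}.

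Call an iteration \emph{critical} if it is the first one in which the greedy removes a row from $S^*$ or a column from $T^*$. Because nothing in $S^* \cup T^*$ has been removed previously, $S^* \subseteq S_{cur}$ and $T^* \subseteq T_{cur}$ hold at the start of the critical iteration. Without loss of generality the algorithm removes a row $u_p \in S^*$ (the column case is symmetric). By Lemma \ref{lemma:2approx} and nonnegativity of $\mathcal{M}$,
\[
\mathcal{R}(\mathcal{M}, u_p, T_{cur}) \;\geq\; \mathcal{R}(\mathcal{M}, u_p, T^*) \;\geq\; \tau_{S^*}.
\]
Since $u_p$ is the argmin row, $\mathcal{R}(\mathcal{M}, s, T_{cur}) \geq \tau_{S^*}$ for every $s \in S_{cur}$; and the algorithm's branch condition $\mathcal{R}(\mathcal{M}, u_p, T_{cur}) \leq \mathcal{C}(\mathcal{M}, S_{cur}, v_p)$ propagates this bound to every column, so $\mathcal{C}(\mathcal{M}, S_{cur}, t) \geq \tau_{S^*}$ for all $t \in T_{cur}$.

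Summing the row bound over $S_{cur}$ and the column bound over $T_{cur}$ gives both $\mathcal{E}(\mathcal{M}, S_{cur}, T_{cur}) \geq |S_{cur}| \tau_{S^*}$ and $\mathcal{E}(\mathcal{M}, S_{cur}, T_{cur}) \geq |T_{cur}| \tau_{S^*}$; multiplying and taking the square root gives $\mathcal{E}(\mathcal{M}, S_{cur}, T_{cur}) \geq \sqrt{|S_{cur}||T_{cur}|}\,\tau_{S^*}$, hence $\mathcal{D}(\mathcal{M}, S_{cur}, T_{cur}) \geq \tau_{S^*}$. I would then close the loop with the elementary inequality $1-\sqrt{1-x}\geq x/2$ on $[0,1]$ applied at $x=1/|S^*|$, yielding $\tau_{S^*} \geq \mathcal{E}^{*}/(2|S^*|)$, together with the symmetric bound $\tau_{T^*} \geq \mathcal{E}^{*}/(2|T^*|)$ in the column-removal case.

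The main obstacle is that these two simplifications deliver $\tau_{S^*} \geq (d_{opt}/2)\sqrt{|T^*|/|S^*|}$ and $\tau_{T^*} \geq (d_{opt}/2)\sqrt{|S^*|/|T^*|}$ respectively, only one of which exceeds $d_{opt}/2$ unconditionally. Reconciling this asymmetry is the hard step: one either argues that the algorithm's greedy decision at the critical iteration always picks the favourable side (so the deletion direction is indexed by the larger of $|S^*|,|T^*|$), or strengthens the density bound by also invoking $\mathcal{E}(\mathcal{M}, S_{cur}, T_{cur}) \geq \mathcal{E}^{*}$ (from $S^* \subseteq S_{cur}$, $T^* \subseteq T_{cur}$ and nonnegativity) and taking whichever of the two lower bounds on $\mathcal{E}(\mathcal{M}, S_{cur}, T_{cur})$ is stronger for the actual $(|S_{cur}|,|T_{cur}|)$. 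Once that balancing step is settled, the theorem follows by reporting the density computed at the critical iteration.
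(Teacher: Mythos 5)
Your setup --- the critical iteration, the use of Lemma \ref{lemma:2approx} through $S^* \subseteq S_{cur}$ and $T^* \subseteq T_{cur}$, and the propagation of the removed side's threshold to all rows and all columns via the argmin and the branch condition --- is exactly the paper's strategy, and everything you derive up to $\mathcal{D}(\mathcal{M}, S_{cur}, T_{cur}) \ge \tau_{S^*}$ (resp.\ $\tau_{T^*}$) is sound. But the proof is not finished: the balancing step you flag as ``the hard step'' is the entire content of the theorem, and neither of your two candidate repairs goes through as stated. The greedy's choice at the critical iteration is governed by the minimum row-sum versus the minimum column-sum of the \emph{current} (generally much larger) submatrix, and nothing forces that comparison to align with which of $|S^*|, |T^*|$ is larger, so you cannot argue the deletion always lands on the favourable side. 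Likewise, adjoining $\mathcal{E}(\mathcal{M}, S_{cur}, T_{cur}) \ge \mathcal{E}^*$ does not rescue the bound: combined with $\mathcal{E}(\mathcal{M}, S_{cur}, T_{cur}) \ge |S_{cur}|\tau_{S^*}$ and $\ge |T_{cur}|\tau_{S^*}$ it still yields a density lower bound that degrades as $|T_{cur}|$ grows beyond $|T^*|$, so no unconditional $d_{opt}/2$ emerges.

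What the paper does instead is assert that at the distinguished iteration \emph{both} thresholds hold simultaneously --- every row-sum is at least $\tau_{S^*}$ \emph{and} every column-sum is at least $\tau_{T^*}$ --- which gives $\mathcal{E}(\mathcal{M}, S_{cur}, T_{cur}) \ge |S_{cur}|\tau_{S^*}$ and $\ge |T_{cur}|\tau_{T^*}$ with two \emph{different} thresholds, hence $\mathcal{D}(\mathcal{M}, S_{cur}, T_{cur}) \ge \sqrt{\tau_{S^*}\tau_{T^*}}$. That geometric mean is precisely what makes your asymmetry cancel: by your own inequality $1-\sqrt{1-x}\ge x/2$, one gets $\sqrt{\tau_{S^*}\tau_{T^*}} \ge \sqrt{\tfrac{\mathcal{E}^*}{2|S^*|}\cdot\tfrac{\mathcal{E}^*}{2|T^*|}} = \tfrac{\mathcal{E}^*}{2\sqrt{|S^*||T^*|}} = \tfrac{d_{opt}}{2}$, which is in fact a more elementary route to the final bound than the paper's trigonometric substitution $|S^*|=1/\sin^2\alpha$, $|T^*|=1/\sin^2\beta$. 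So the missing idea is the two-sided simultaneous bound, not a cleverer use of the one-sided one. To be fair, your single-removal analysis only ever certifies one threshold on both sides, and the paper states the two-sided version by appeal to \citep{khuller2009finding} rather than deriving it from the algorithm's comparison rule; your obstacle is pointing at exactly the delicate part of that argument, but identifying a difficulty is not the same as resolving it, so as written the proposal does not establish the theorem.
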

\begin{proof} Leveraging the proof from \citep{khuller2009finding}, we greedily remove the row (or column) with minimum row-sum (or column-sum). At some iteration of the greedy process, $\;\forall u \in S_{cur}; \forall v \in T_{cur}$, $\;\mathcal{R}(\mathcal{M}, u, T_{cur}) \ge \tau_{S^*}$ and $\mathcal{C}(\mathcal{M}, S_{cur}, v) \ge \tau_{T^*}$. Therefore, $\mathcal{E}(\mathcal{M}, S_{cur}, T_{cur}) \ge |S_{cur}|\tau_{S^*}$ and $\mathcal{E}(\mathcal{M}, S_{cur}, T_{cur}) \ge |T_{cur}|\tau_{T^*}$. This implies that the density $\mathcal{D}(\mathcal{M}, S_{cur}, T_{cur}) \ge \sqrt{\frac{|S_{cur}|\tau_{S^*}|T_{cur}|\tau_{T^*}}{|S_{cur}||T_{cur}|}} = \sqrt{\tau_{S^*}\tau_{T^*}}$. Putting values of $\tau_{S^*}$ and $\tau_{T^*}$ from Lemma \ref{lemma:2approx}, and setting $|S^*| = \frac{1}{\sin^2\alpha}$, $|T^*| = \frac{1}{\sin^2\beta}$, we get $\mathcal{D}(\mathcal{M}, S_{cur}, T_{cur}) \ge \frac{\mathcal{E}(\mathcal{M, S^*, T^*})}{\sqrt{|S^*||T^*|}}\frac{\sqrt{(1-\cos\alpha)(1-\cos\beta)}}{\sin\alpha\sin\beta} \ge \frac{d_{opt}}{2\cos\frac{\alpha}{2}\cos\frac{\beta}{2}} \ge \frac{d_{opt}}{2}$.
\end{proof}

\begin{algorithm}[!htb]
\caption{\methodgraph Scoring}
\label{alg:AnoGraph}
    \DontPrintSemicolon
    \SetNoFillComment
    \SetKwFunction{algo}{algo}\SetKwFunction{procf}{\textsc{\methodgraph-Density}}\SetKwFunction{procs}{\methodgraph}

    \KwInput{Stream $\mathscr{G}$ of edges over time}
    \KwOutput{Anomaly score per graph}
    
    \setcounter{AlgoLine}{0}
    \SetKwProg{myproc}{Procedure}{}{}
    \myproc{\procs{$\mathscr{G}$}}{
    Initialize H-CMS matrix $\mathcal{M}$ for graph edges count \\
    \While{new graph $G \in \mathscr{G}$ is received} {
        \tcc{reset count}
        Reset H-CMS matrix $\mathcal{M}$ for graph $G$ \\
        \For{edge $e = (u, v, w, t) \in G$}{
            Update H-CMS matrix $\mathcal{M}$ for edge $(u, v)$ with value $w$ \tcp*{update count}
        }
        % $density \leftarrow$ \methodgraph\textsc{-Density}($\mathcal{M}$) \tcp*{compute density}
        \tcc{anomaly score}
        \textbf{output} $score(G) \leftarrow$ \methodgraph\textsc{-Density}($\mathcal{M}$) 
    }}
    
    \myproc{\procf{$\mathcal{M}$}}{
    $S_{cur} \leftarrow [n_b]; \enspace T_{cur} \leftarrow [n_b]$ \tcp*{initialize to size of $\mathcal{M}$}
    $d_{max} \leftarrow \mathcal{D}(\mathcal{M}, S_{cur}, T_{cur})$ \;
    \While{$S_{cur} \neq \emptyset \enspace \vee \enspace T_{cur} \neq \emptyset$} {
        \tcc{submatrix min row-sum index}
        $u_p \leftarrow \operatorname*{argmin}_{s_p \in S_{cur}} \mathcal{R}(\mathcal{M}, s_p, T_{cur})$ \\
        \tcc{submatrix min column-sum index}
        $v_p \leftarrow \operatorname*{argmin}_{t_p \in T_{cur}} \mathcal{C}(\mathcal{M}, S_{cur}, t_p)$ \\
        \If{$\mathcal{R}(\mathcal{M}, u_p, T_{cur}) < \mathcal{C}(\mathcal{M}, S_{cur}, v_p)$} {
            $S_{cur} \leftarrow S_{cur}/\{u_p\}$ \tcp*{remove row}
        } 
        \Else {
             $T_{cur} \leftarrow T_{cur}/\{v_p\}$ \tcp*{remove column}
        }
        $d_{max} \leftarrow max(d_{max}, \mathcal{D}(\mathcal{M}, S_{cur}, T_{cur}))$ \;
    }
    \KwRet $d_{max}$ \tcp*{dense submatrix density}}
\end{algorithm}

% \begin{proposition}\label{thm:graphbaseddensityv1time}
% (Proof in Appendix \ref{app:proofs}) Time complexity of Algorithm \ref{alg:AnoGraph} is $O(n_b^2)$.
% \end{proposition}
% \begin{proof}
% Algorithm \ref{alg:AnoGraph} iteratively removes row (or column) with minimum row (or column) sum. Total number of rows and columns that could be removed are $n_b + n_b - 2$. We keep arrays of size $n_b$ to store current submatrix rows and columns, and row sums and column sums. At each iteration, picking row (or column) with minimum row (or column) sum takes $O(n_b)$ time. We also update the dependent row (or column) sums, that takes $O(n_b)$ time. Density can be calculated in $O(n_b)$ time based on current submatrix row and column sums. Every iteration, therefore, takes $O(n_b + n_b + n_b) = O(n_b)$ time. Hence, total time complexity of Algorithm 4 is $O((n_b + n_b - 2)*n_b) = O(n_b^2)$.
% \end{proof}

% \begin{proposition}\label{thm:graphbaseddensityv1memory}
% (Proof in Appendix \ref{app:proofs}) Memory complexity of Algorithm \ref{alg:AnoGraph} is $O(n_b^2)$.
% \end{proposition}
% \begin{proof}
% We keep array of size $n_b$ to mark rows and columns that are part of current submatrix. We also keep $n_b$ size array for row sums and column sums. Hence, memory complexity for Algorithm \ref{alg:AnoGraph} is $O(4*n_b) = O(n_b)$. 
% \end{proof}

\begin{proposition}\label{thm:AnoGraph-time}
(Proof in Appendix \ref{app:proofsAnoGraph}) Time complexity of Algorithm \ref{alg:AnoGraph} is $O(|\mathscr{G}|*n_r*n_b^2 + |\mathscr{E}|*n_r)$. Memory complexity of Algorithm \ref{alg:AnoGraph} is $O(n_r*n_b^2)$.
\end{proposition}

% \begin{proposition}\label{thm:AnoGraph-memory}
% (Proof in Appendix \ref{app:proofsAnoGraph}) Memory complexity of Algorithm \ref{alg:AnoGraph} is $O(n_r*n_b^2)$.
% \end{proposition}

\subsection{\methodgraph-K}
Similar to \methodgraph, \methodgraph-K maintains an H-CMS which is reset whenever a new graph arrives. It uses the \textsc{\methodgraph-K-Density} procedure (described below) to find the dense submatrix. \methodgraph-K is summarised in Algorithm \ref{alg:AnoGraph-K}.

\textsc{\methodgraph-K-Density} computes the density of a dense submatrix of matrix $\mathcal{M}$. The intuition comes from the heuristic that the matrix elements with a higher value are more likely to be part of a dense submatrix. Hence, the approach considers $K$ largest elements of the matrix $\mathcal{M}$ and calls \textsc{Edge-Submatrix-Density} from Algorithm \ref{alg:AnoEdge-G} to get the dense submatrix around each of those elements (line 13). The maximum density over the considered $K$ dense submatrices is returned.

\begin{algorithm}[!htb]
\caption{\methodgraph-K Scoring}
\label{alg:AnoGraph-K}
    \DontPrintSemicolon
    \SetNoFillComment
    \SetKwFunction{algo}{algo}\SetKwFunction{procf}{\methodgraph-K-Density}\SetKwFunction{procs}{\textsc{\methodgraph-K}}
    
    \KwInput{Stream $\mathscr{G}$ of edges over time}
    \KwOutput{Anomaly score per graph}
    
    \setcounter{AlgoLine}{0}
    \SetKwProg{myproc}{Procedure}{}{}
    \myproc{\procs{$\mathscr{G}, K$}}{
    Initialize H-CMS matrix $\mathcal{M}$ for graph edges count \\
    \While{new graph $G \in \mathscr{G}$ is received} {
        \tcc{reset count}
        Reset H-CMS matrix $\mathcal{M}$ for graph $G$ \\
        \For{edge $e = (u, v, w, t) \in G$}{
            Update H-CMS matrix $\mathcal{M}$ for edge $(u, v)$ with value $w$ \tcp*{update count}
        }
        % $density \leftarrow$  \methodgraph\textsc{-K-Density}($\mathcal{M}, K$) \tcp*{compute density}
        \tcc{anomaly score}
        \textbf{output} $score(G) \leftarrow$ \methodgraph\textsc{-K-Density}($\mathcal{M}, K$) 
    }}
    
    \myproc{\procf{$\mathcal{M}$, $K$}}{
    $B \leftarrow [n_b] \times [n_b]$ \tcp*{set of all indices}
    $d_{max} \leftarrow 0$ \;
    \For{$j\gets1$ ... $K$}{
        \tcc{pick the max element}
        $u_p, v_p \leftarrow \operatorname*{argmax}_{(s_p, t_p) \in B} \mathcal{M}[s_p][t_p]$ \\
        $d_{max} \leftarrow max(d_{max}, \textsc{Edge-Submatrix-Density}({\mathcal{M}, u_p, v_p}))$ \;
        $B \leftarrow B/\{(u_p, v_p)\}$ \tcp*{remove max element index}
    }
    \KwRet $d_{max}$ \tcp*{dense submatrix density}}
\end{algorithm}

\begin{proposition}\label{thm:AnoGraph-K-time}
(Proof in Appendix \ref{app:proofsAnoGraph-K}) Time complexity of Algorithm \ref{alg:AnoGraph-K} is $O(|\mathscr{G}|*K*n_r*n_b^2 + |\mathscr{E}|*n_r)$. Memory complexity of Algorithm \ref{alg:AnoGraph-K} is $O(n_r*n_b^2)$.
\end{proposition}
% \begin{proof}
% Algorithm \ref{alg:graphanomaly} with \methodgraph-K calls Algorithm \ref{alg:AnoGraph-K}  for density computation, and hence differs from \methodgraph-variant1 in terms of density computation complexity step. Based on theorem 7, time complexity of Algorithm \ref{alg:graphanomaly} with \methodgraph-K is $O(|\mathscr{G}|*K*n_r*n_b^2 + |\mathscr{E}|*n_r)$.
% \end{proof}

% \begin{proposition}\label{thm:AnoGraph-K-memory}
% (Proof in Appendix \ref{app:proofsAnoGraph-K}) Memory complexity of Algorithm \ref{alg:AnoGraph-K} is $O(n_r*n_b^2)$.
% \end{proposition}
% \begin{proof}
% Memory complexity for keeping H-CMS data structure is $O(n_r*n_b^2)$. Density computation relies on Algorithm \ref{alg:AnoGraph} or \ref{alg:AnoGraph-K}, and the memory  complexity of that step is $O(n_b^2)$ from Theorem \ref{thm:graphbaseddensityv1memory} and \ref{thm:graphbaseddensityv2memory}. Hence, the total memory complexity for Algorithm \ref{alg:graphanomaly} is $O(n_r*n_b^2)$.
% \end{proof}

\section{Experiments}
\label{sec:exp}

% \begin{table*}[!htb]
% 		\centering
% 		\caption{Statistics of the datasets.}
% 		\begin{tabular}{lrrrrr}
% 			\toprule
% 			\textbf{Dataset} & $|V|$           & $|E|$             & $|T|$  & Edge Anomalies & Graph Anomalies \\
% 			\midrule
% 			DARPA               & \numprint{25525}  & \numprint{4554344}  & \numprint{46567} & $60.1\%$ & $26.5\%$ \\
% 			ISCX-IDS2012         & \numprint{30917}  & \numprint{1097070}  & \numprint{165043} & $4.23\%$ & $3.38\%$ \\
% 			CIC-IDS2018      & \numprint{33176}  & \numprint{7948748}  & \numprint{38478} & $7.26\%$ & $11.0\%$ \\
% 			CIC-DDoS2019         & \numprint{1290}   & \numprint{20364525} & \numprint{12224} & $99.7\%$ & $51.4\%$\\
% 			\bottomrule
% 		\end{tabular}
% 		\label{tab:Experiment.Dataset}
% 	\end{table*}

\begin{table}[!htb]
		\centering
		\caption{Statistics of the datasets.}
		\begin{tabular}{lrrrrr}
			\toprule
			\textbf{Dataset} & $|V|$           & $|E|$             & $|T|$  \\
			\midrule
			DARPA               & \numprint{25525}  & \numprint{4554344}  & \numprint{46567}  \\
			ISCX-IDS2012         & \numprint{30917}  & \numprint{1097070}  & \numprint{165043} \\
			CIC-IDS2018      & \numprint{33176}  & \numprint{7948748}  & \numprint{38478} \\
			CIC-DDoS2019         & \numprint{1290}   & \numprint{20364525} & \numprint{12224} \\
			\bottomrule
		\end{tabular}
		\label{tab:Experiment.Dataset}
	\end{table}

\begin{table*}[!htb]
\centering
\caption{AUC and Running Time when detecting edge anomalies. Averaged over $5$ runs.}
\label{tab:edge}
% \resizebox{\columnwidth}{!}{
\begin{tabular}{@{}lrrrrrrr}
\toprule
  Dataset & \densestream & \sedanspot & MIDAS-R & PENminer & F-FADE & \textbf{\methodedge-G} & \textbf{\methodedge-L} \\ 
\midrule
\multirow{2}{*}{DARPA} & $0.532$ & $0.647 \pm 0.006$ & $0.953 \pm 0.002$ &  0.872 & $0.919 \pm 0.005$ & $\bf 0.970 \pm 0.001$ & $ 0.964 \pm 0.001$ \\ 
& 57.7s & 129.1s  & 1.4s & 5.21 hrs & 317.8s & 28.7s & 6.1s \\ \midrule
\multirow{2}{*}{ISCX-IDS2012} & $0.551$ & $0.581 \pm 0.001$ & $0.820 \pm 0.050$ & 0.530 & $0.533 \pm 0.020$ & $\bf 0.954 \pm 0.000$ & $\bf 0.957 \pm 0.003$ \\ 
 & 138.6s & 19.5s & 5.3s & 1.3 hrs & 137.4s & 7.8s & 0.7s \\  \midrule
\multirow{2}{*}{CIC-IDS2018} & $0.756$ & $0.325 \pm 0.037$ & $0.919 \pm 0.019$ &  0.821 & $0.607 \pm 0.001$ & $\bf 0.963 \pm 0.014$ & $0.927 \pm 0.035$ \\ 
 & 3.3 hours & 209.6s & 1.1s & 10 hrs & 279.7s & 58.4s & 10.2s \\  \midrule
\multirow{2}{*}{CIC-DDoS2019} & $0.263$ & $0.567 \pm 0.004$ & $0.983 \pm 0.003$ &  --- & $0.717 \pm 0.041$ & $\bf 0.997 \pm 0.001$ & $\bf 0.998 \pm 0.001$ \\
 & 265.6s & 697.6s & 2.2s & > 24 hrs & 18.7s & 123.3s & 17.8s \\
\bottomrule
\end{tabular}
% }
\end{table*}

% \begin{figure}[!htb]
%   \centering
%   \begin{subfigure}[b]{0.5\textwidth}
%     \centering
%     \includegraphics[width=\textwidth]{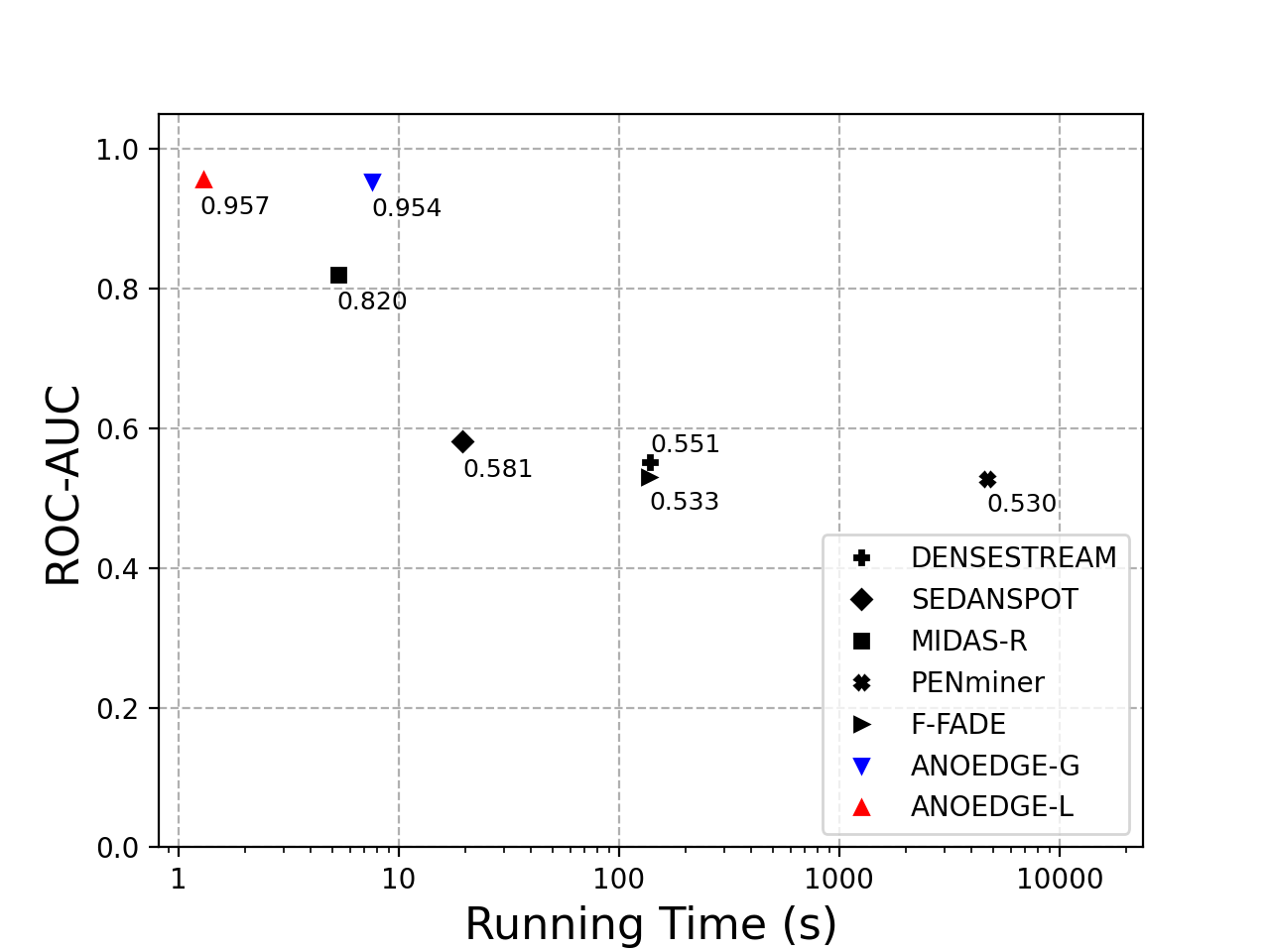}
%   \end{subfigure}
%   \begin{subfigure}[b]{0.5\textwidth}
%     \centering
%     \includegraphics[width=\textwidth]{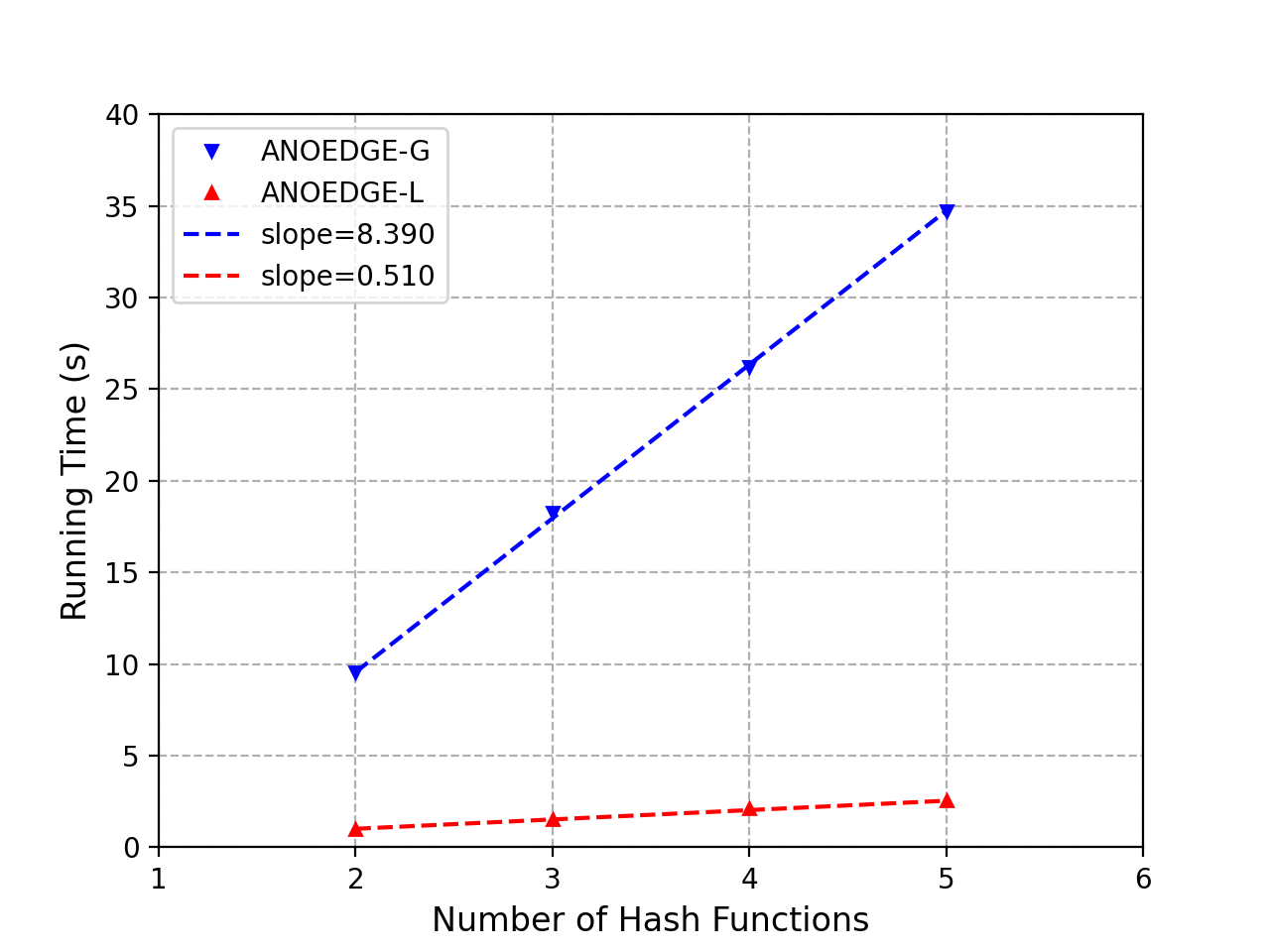}
%   \end{subfigure}
%   \begin{subfigure}[b]{0.5\textwidth}
%     \centering
%     \includegraphics[width=\textwidth]{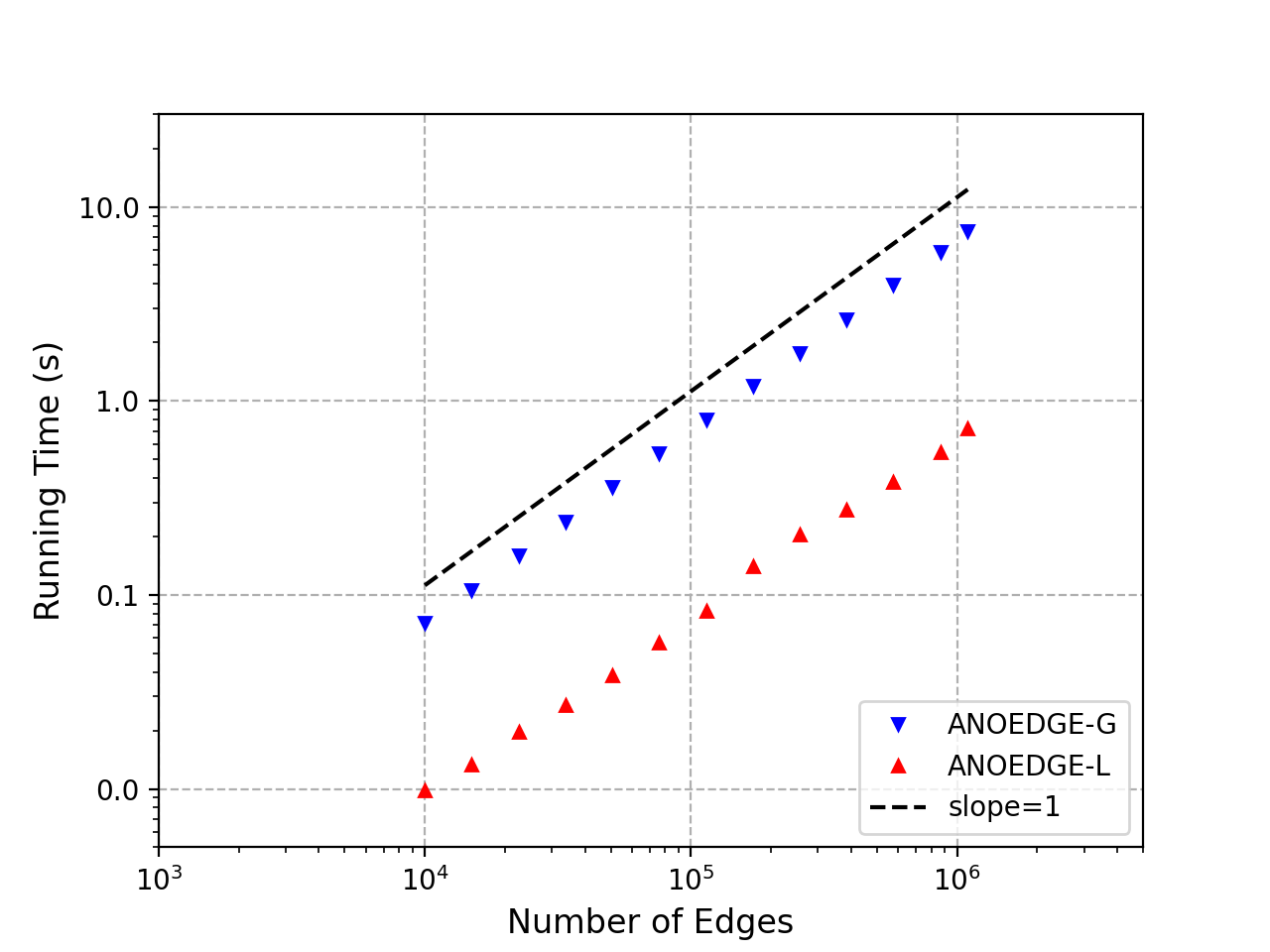}
%   \end{subfigure}
%   \caption{On \emph{ISCX-IDS2012}, (a) AUC vs running time when detecting edge anomalies. (b) Linear scalability with number of hash functions. (c) Linear scalability with number of edges.}
%   \label{fig:edge}
% \end{figure}

In this section, we evaluate the performance of our approaches as compared to all baselines discussed in Table \ref{tab:comparison} and aim to answer the following questions:
\begin{enumerate}[label=\textbf{Q\arabic*.}]
\item {\bf Edge Anomalies:} How accurately do \methodedge-G and \methodedge-L detect edge anomalies compared to baselines? Are they fast and scalable?
\item {\bf Graph Anomalies:} How accurately do \methodgraph\ and \methodgraph-K detect graph anomalies i.e. anomalous graph snapshots? Are they fast and scalable?
\end{enumerate}

Table~\ref{tab:Experiment.Dataset} shows the statistical summary of the four real-world datasets that we use: \emph{DARPA} \citep{lippmann1999results} and \emph{ISCX-IDS2012} \citep{shiravi2012toward} are popular datasets for graph anomaly detection used by baselines to evaluate their algorithms; \citep{ring2019survey} surveys more than $30$ datasets and recommends to use the newer \emph{CIC-IDS2018} and \emph{CIC-DDoS2019} datasets \citep{sharafaldin2018toward,sharafaldin2019developing} containing modern attack scenarios. $|E|$ corresponds to the total number of edge records, $|V|$ and $|T|$ are the number of unique nodes and unique timestamps, respectively.

Similar to baseline papers, we report the Area under the ROC curve (AUC) and the running time. AUC is calculated by plotting the true positive rate (TPR) against the false positive rate (FPR) at various classification thresholds and then calculating the area under the resulting receiver operating characteristic (ROC) curve. The appropriate classification threshold for an anomaly detection system will depend on the specific application and the cost of false positives and false negatives, however, since AUC is independent of the classification threshold, one can evaluate the overall performance of the system without having to choose a specific threshold. Unless explicitly specified, all experiments including those on the baselines are repeated $5$ times and the mean is reported.

Appendix \ref{sec:setup} describes the experimental setup. Hyperparameters for the baselines are provided in Appendix \ref{sec:baselines}. All edge (or graph)-based methods output an anomaly score per edge (or graph), a higher score implying more anomalousness.

    %seeds 42-46
    
    % \begin{table*}[!htb]
% \centering
% \caption{AUC of each method on different datasets}
% \label{tab:edgedstreamauc}
% \begin{tabular}{@{}p{2.5cm}|c|c|c|c|c|c}
% \toprule
%   & \textbf{SedanSpot} & \textbf{DenseStream} & \textbf{F-FADE} & \textbf{MIDAS} & \textbf{MIDAS-R} & \textbf{DSTREAM-E} \\ 
%   \midrule
% \textbf{CIC-DDoS2019} & 0.57 &  & 0.70 & 0.72 & 0.98 & 0.99 \\ 
% \textbf{CIC-IDS2018} & 0.31 &  & 0.61 & 0.63 & 0.91 & 0.66 \\ 
% \textbf{CTU-13} & 0.45 &  & 0.17 & 0.91 & 0.97 & 0.87 \\ 
% \textbf{DARPA} & 0.65 &  & 0.92 & 0.90 & 0.95 & 0.96 \\ 
% \textbf{ISCX-IDS2012} & 0.59 &  & 0.52 & 0.40 & 0.78 & 0.75 \\ 
% \textbf{UNSW-NB15} & 0.49 &  & 0.72 & 0.88 & 0.89 & 0.74 \\ 
% \bottomrule
% \end{tabular}
% \end{table*}

\subsection{Edge Anomalies}

{\bf Accuracy:} Table \ref{tab:edge} shows the AUC of edge anomaly detection baselines, \methodedge-G, and \methodedge-L. We report a single value for \densestream\ and PENminer because these are non-randomized methods. PENminer is unable to finish on the large \emph{CIC-DDoS2019} within 24 hours. \sedanspot\ uses personalised PageRank to detect anomalies and is not always able to detect anomalous edges occurring in dense block patterns while PENminer is unable to detect structural anomalies. Among the baselines, MIDAS-R is most accurate, however, it performs worse when there is a large number of timestamps as in \emph{ISCX-IDS2012}. Note that \methodedge-G and \methodedge-L outperform all baselines on all datasets.

{\bf Running Time:} Table \ref{tab:edge} shows the running time (excluding I/O) and real-time performance of \methodedge-G and \methodedge-L. Since \methodedge-L maintains a local dense submatrix, it is faster than \methodedge-G. \densestream\ maintains dense blocks incrementally for every coming tuple and updates dense subtensors when it meets an updating condition, limiting the detection speed. \sedanspot\ requires several subprocesses (hashing, random-walking, reordering, sampling, etc), PENminer and F-FADE need to actively extract patterns for every graph update, resulting in a large computation time. When there is a large number of timestamps like in \emph{ISCX-IDS2012}, MIDAS-R performs slower than \methodedge-L which is fastest.

{\bf AUC vs Running Time:} Figure \ref{fig:edgea} plots accuracy (AUC) vs. running time (log scale, in seconds, excluding I/O) on \emph{ISCX-IDS2012} dataset. \methodedge-G and \methodedge-L achieve much higher accuracy compared to all baselines, while also running significantly faster.

\begin{figure}[!htb]
    \centering
    \includegraphics[width=0.4\textwidth]{Fig/edge_plot_auc_time_new.png}
  \caption{AUC vs running time when detecting edge anomalies on \emph{ISCX-IDS2012}}
  \label{fig:edgea}
\end{figure}

\begin{table*}[!htb]
\centering
\caption{AUC and Running Time when detecting graph anomalies. Averaged over $5$ runs.}
\label{tab:graph}
% \resizebox{\columnwidth}{!}{
\begin{tabular}{@{}lrrrrr}
\toprule
Dataset & \densealert & \spotlight & \anomrank & \textbf{\methodgraph} & \textbf{\methodgraph-K} \\ 
\midrule
\multirow{2}{*}{DARPA} & $0.833$ & $0.728 \pm 0.016$  & $0.754$ & $0.835 \pm 0.002$  & $\bf 0.839 \pm 0.002$ \\ 
& 49.3s & 88.5s & 3.7s & 0.3s & 0.3s \\ \midrule
\multirow{2}{*}{ISCX-IDS2012} & $0.906$ & $0.872 \pm 0.019$ & $0.194$ & $\bf0.950 \pm 0.001$ & $\bf 0.950 \pm 0.001$ \\ 
& 6.4s & 21.1s & 5.2s & 0.5s & 0.5s \\ \midrule
\multirow{2}{*}{CIC-IDS2018}  & $0.950$  & $0.835 \pm 0.022$ & $0.783$ & $\bf 0.957 \pm 0.000$ & $\bf 0.957 \pm 0.000$ \\
& 67.9s & 149.0s & 7.0s & 0.2s & 0.3s \\ \midrule
\multirow{2}{*}{CIC-DDoS2019} & $0.764$ & $0.468 \pm 0.048$ & $0.241$ & $0.946 \pm 0.002$ & $\bf 0.948 \pm 0.002$ \\   
& 1065.0s & 289.7s & 0.2s & 0.4s & 0.4s \\ 
\bottomrule
\end{tabular}
% }
\end{table*}

{\bf Scalability:} Figures \ref{fig:edge}(a) and \ref{fig:edge}(b) plot the running time with increasing number of hash functions and edges respectively, on the \emph{ISCX-IDS2012} dataset. This demonstrates the scalability of \methodedge-G and \methodedge-L.

{\bf \methodedge-G vs \methodedge-L:} \methodedge-G finds a \textbf{G}lobal dense submatrix and therefore is more accurate than \methodedge-L as shown in the performance on \emph{CIC-IDS2018}. \methodedge-L on the other hand maintains and updates a \textbf{L}ocal dense submatrix around the matrix element and therefore has better time complexity and scalability to larger datasets.

\begin{figure}[!htb]
  \centering
  \begin{subfigure}[b]{0.35\textwidth}
    \centering
    \includegraphics[width=\textwidth]{Fig/edge_plot_hashfn_time.png}
  \end{subfigure}
  \begin{subfigure}[b]{0.35\textwidth}
    \centering
    \includegraphics[width=\textwidth]{Fig/edge_plot_edges_time.png}
  \end{subfigure}
  \caption{(a) Linear scalability with number of hash functions. (b) Linear scalability with number of edges.}
  \label{fig:edge}
\end{figure}

\subsection{Graph Anomalies}

{\bf Accuracy:} Table \ref{tab:graph} shows the AUC of graph anomaly detection baselines, \methodgraph, and \methodgraph-K. We report a single value for \densealert\ and \anomrank\ because these are non-randomized methods. \anomrank\ is not meant for a streaming scenario, therefore the low AUC. \densealert\ can estimate only one subtensor at a time and \spotlight\ uses a randomized approach without any actual search for dense subgraphs. Note that \methodgraph\ and \methodgraph-K outperform all baselines on all datasets while using a simple sketch data structure to incorporate dense subgraph search as opposed to the baselines.

{\bf Running Time:} Table \ref{tab:graph} shows the running time (excluding I/O). \densealert\ has $O(|\mathscr{E}|)$ worse case time complexity (per incoming edge). \anomrank\ needs to compute a global PageRank, which does not scale for stream processing. Note that \methodgraph\ and \methodgraph-K run much faster than all baselines.

{\bf AUC vs Running Time:} Figure \ref{fig:grapha} plots accuracy (AUC) vs. running time (log scale, in seconds, excluding I/O) on the \emph{CIC-DDoS2019} dataset. \methodgraph\ and \methodgraph-K achieve much higher accuracy compared to the baselines, while also running significantly faster.

\begin{figure}[!htb]
    \centering
    \includegraphics[width=0.4\textwidth]{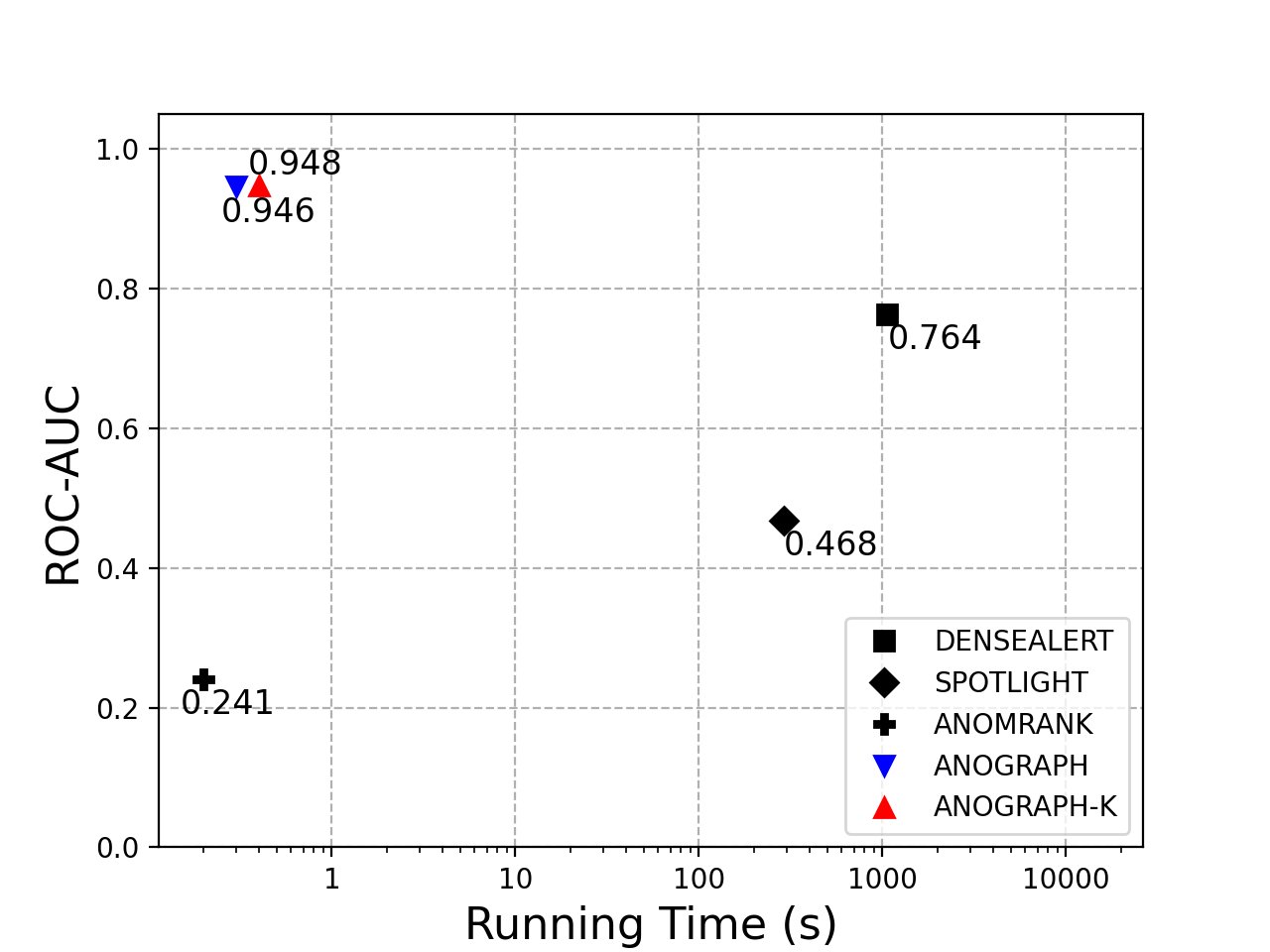}
  \caption{AUC vs running time when detecting graph anomalies on \emph{CIC-DDoS2019}}
  \label{fig:grapha}
\end{figure}

{\bf Scalability:} Figures \ref{fig:graph}(a), \ref{fig:graph}(b), and \ref{fig:graph}(c) plot the running time with increasing factor $K$ (used for top-$K$ in Algorithm \ref{alg:AnoGraph-K}), number of hash functions and number of edges respectively, on the \emph{CIC-DDoS2019} dataset. This demonstrates the scalability of \methodgraph\ and \methodgraph-K.

\begin{figure}[!htb]
  \centering
  \begin{subfigure}[b]{0.33\textwidth}
    \centering
    \includegraphics[width=\textwidth]{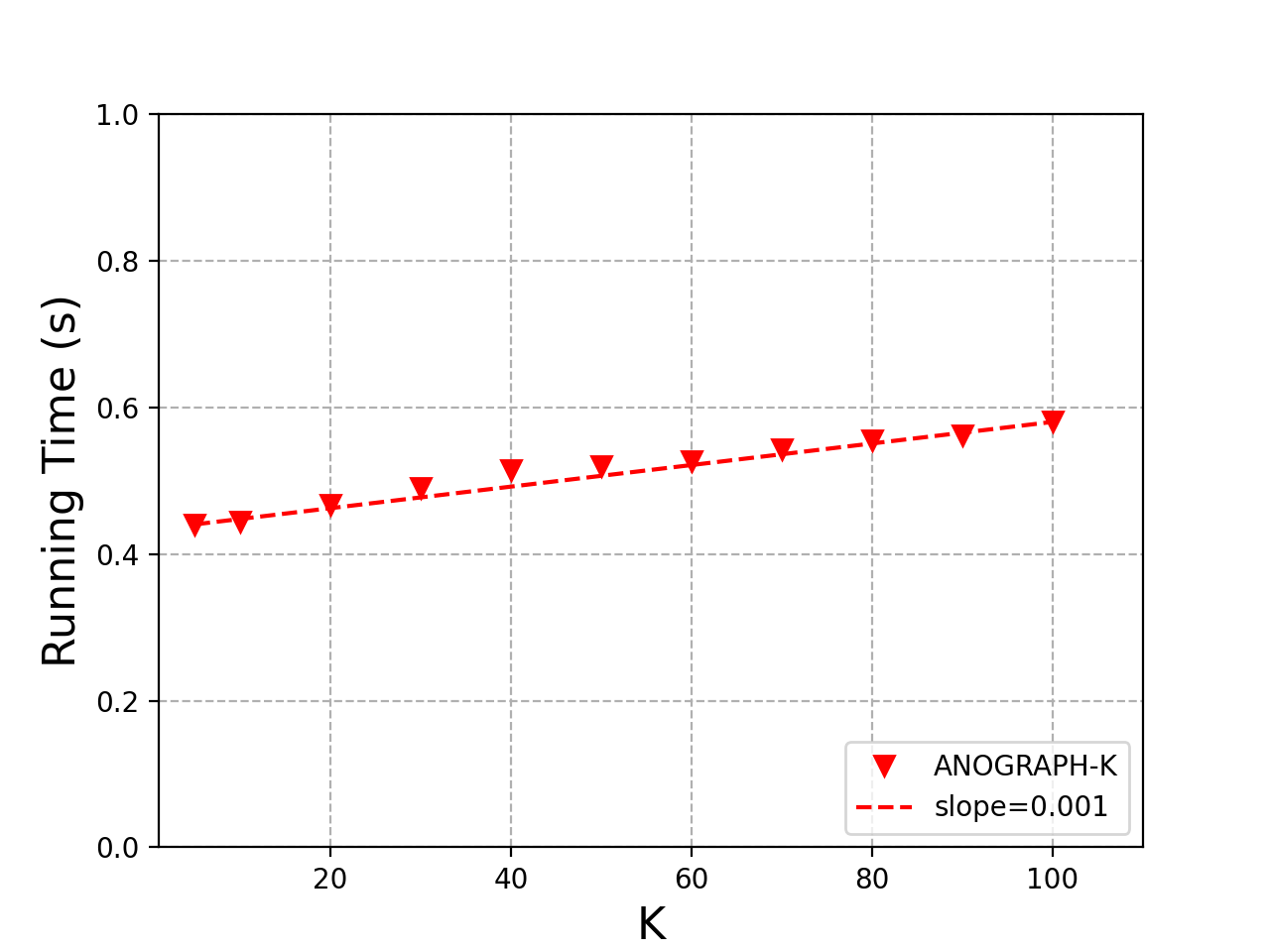}
  \end{subfigure}
  \begin{subfigure}[b]{0.33\textwidth}
    \centering
    \includegraphics[width=\textwidth]{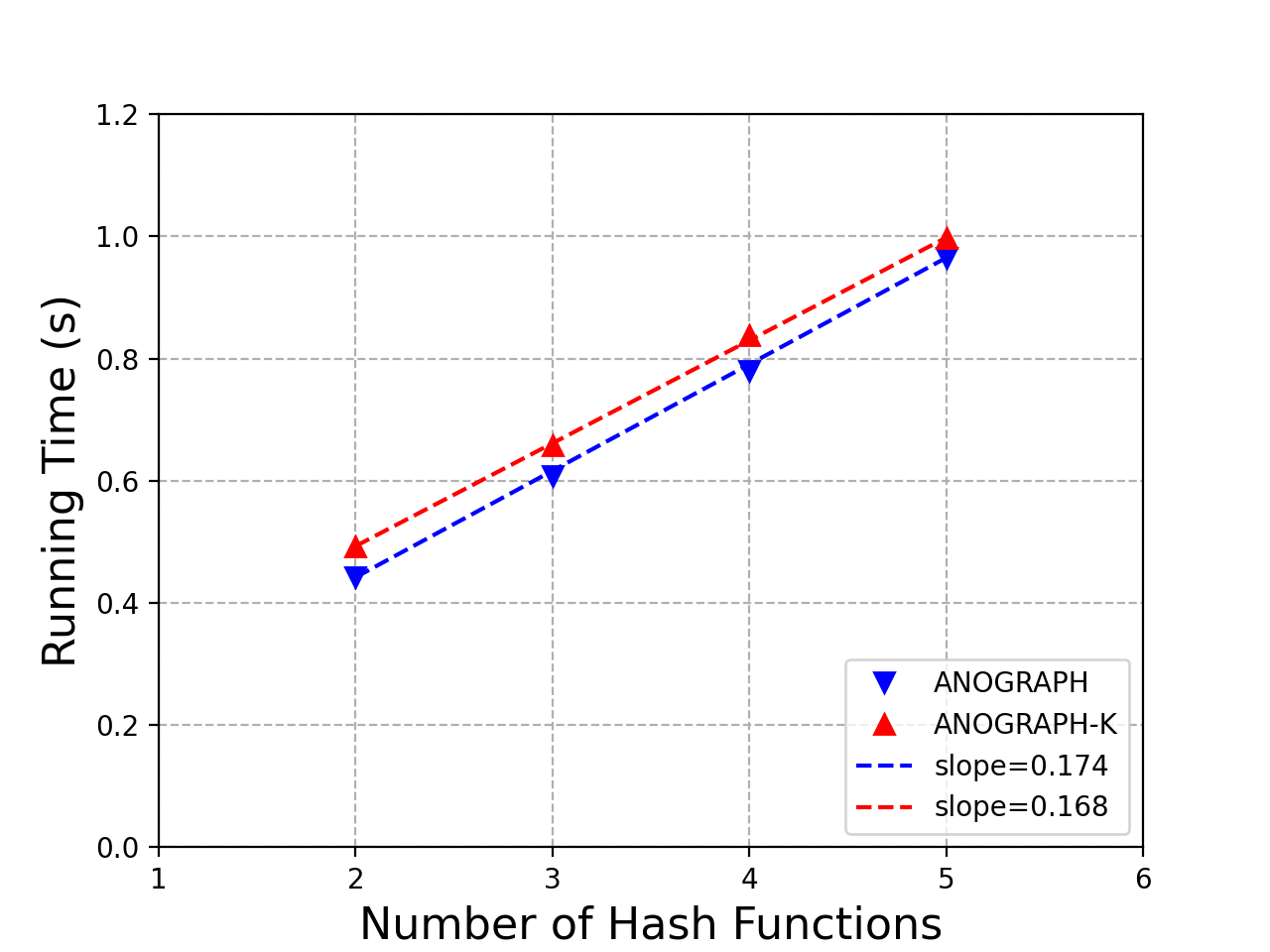}
  \end{subfigure}
    \begin{subfigure}[b]{0.33\textwidth}
    \centering
    \includegraphics[width=\textwidth]{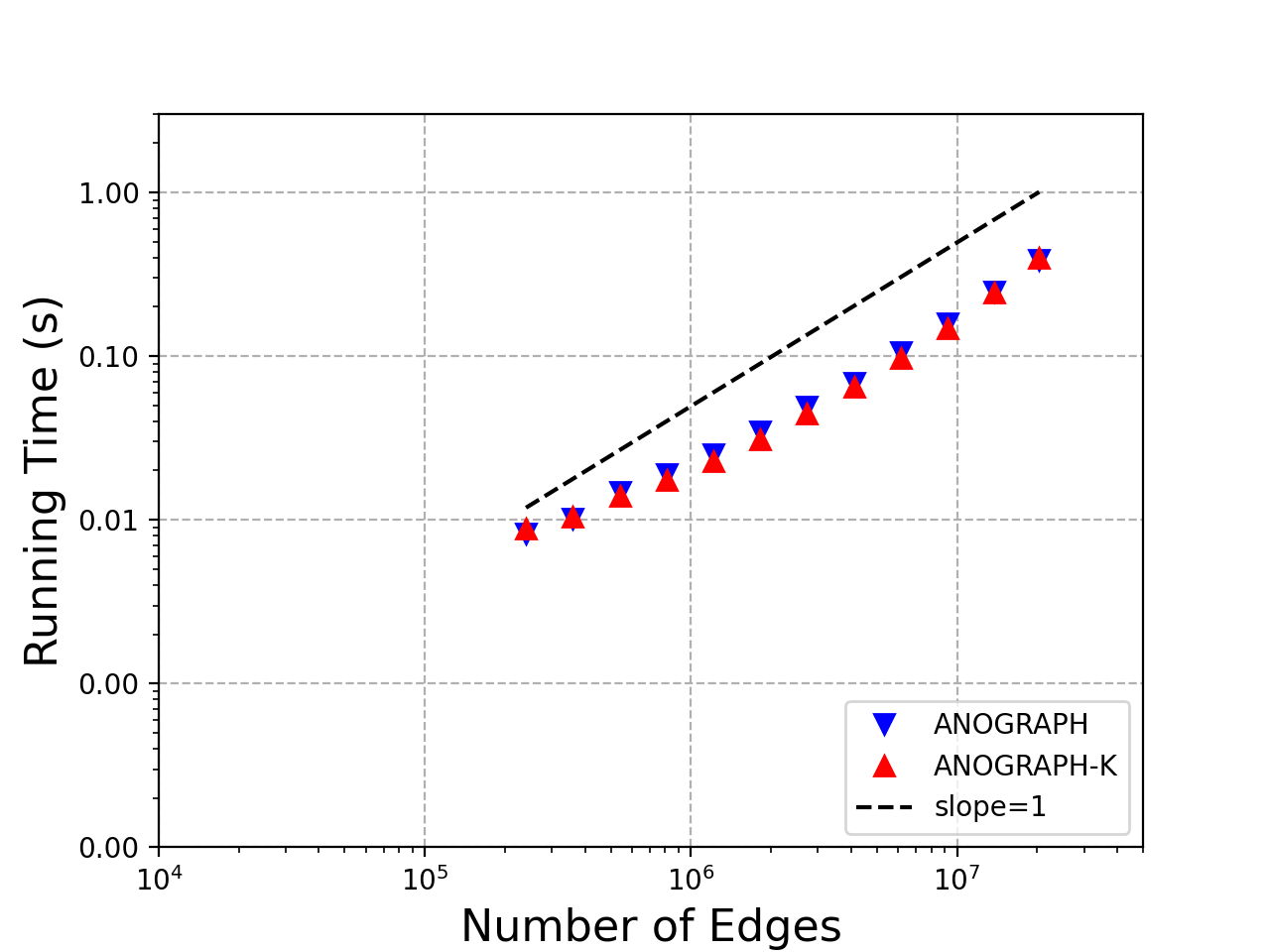}
  \end{subfigure}
  \caption{(a) \methodgraph-K scales linearly with factor $K$. (b) Linear scalability with number of hash functions. (c) Linear scalability with number of edges.}
  \label{fig:graph}
\end{figure}

{\bf \methodgraph\ vs \methodgraph-K:} \methodgraph\ greedily finds a dense submatrix and has a time complexity of $O(|\mathscr{G}|*n_r*n_b^2 + |\mathscr{E}|*n_r)$, while \methodgraph-K greedily finds a dense submatrix around \textbf{$K$} strategically picked matrix elements and has a time complexity of $O(|\mathscr{G}|*K*n_r*n_b^2 + |\mathscr{E}|*n_r)$. Therefore, \methodgraph-K is faster when $K$ is significantly smaller than $n$. \methodgraph-K is also more robust because it only considers a small number of matrix elements.

\begin{table}[!htb]
\centering
\caption{Influence of time window and edge threshold on the ROC-AUC when detecting graph anomalies.}
\label{tab:graphaucsupp}
\resizebox{\columnwidth}{!}{
\begin{tabular}{@{}lrrrr}
\toprule
Dataset & Time & Edge & \textbf{\methodgraph} & \textbf{\methodgraph-K} \\ 
 & Window & Threshold & & \\
\midrule
\multirow{4}{*}{DARPA} & $15$ & $25$ & $0.835 \pm 0.001$ & $0.838 \pm 0.001$ \\ 
 & $30$ & $50$ & $0.835 \pm 0.002$ & $0.839 \pm 0.002$ \\
 & $60$ & $50$ & $0.747 \pm 0.002$ & $0.748 \pm 0.001$ \\ 
 & $60$ & $100$ & $0.823 \pm 0.000$ & $0.825 \pm 0.001$ \\ 
\midrule
\multirow{4}{*}{ISCX-IDS2012} & $15$ & $25$ & $0.945 \pm 0.001$ & $0.945 \pm 0.000$ \\  
 & $30$ & $50$ & $0.949 \pm 0.001$ & $0.948 \pm 0.000$ \\ 
 & $60$ & $50$ & $0.935 \pm 0.002$ & $0.933 \pm 0.002$ \\ 
 & $60$ & $100$ & $0.950 \pm 0.001$ & $0.950 \pm 0.001$ \\ 
\midrule
\multirow{4}{*}{CIC-IDS2018} & $15$ & $25$ & $0.945 \pm 0.004$ & $0.947 \pm 0.006$ \\ 
& $30$ & $50$ & $0.959 \pm 0.000$ & $0.959 \pm 0.001$ \\
& $60$ & $50$  & $0.920 \pm 0.001$ & $0.920 \pm 0.001$ \\ 
& $60$ & $100$ & $0.957 \pm 0.000$ & $0.957 \pm 0.000$ \\ 
\midrule
\multirow{4}{*}{CIC-DDoS2019} & $15$ & $25$ & $0.864 \pm 0.002$ & $0.863 \pm 0.003$ \\  
& $30$ & $50$ & $0.861 \pm 0.003$ & $0.861 \pm 0.003$ \\
& $60$ & $50$  & $0.824 \pm 0.004$ & $0.825 \pm 0.005$ \\ 
& $60$ & $100$ & $0.946 \pm 0.002$ & $0.948 \pm 0.002$ \\ 
\bottomrule
\end{tabular}
}
% \vspace{-6mm}
\end{table}

% \vspace{-2mm}
\begin{table}[!htb]
		\centering
		\caption{Influence of temporal decay factor $\alpha$ on the ROC-AUC in \methodedge-G and \methodedge-L on \emph{DARPA}.}
		\label{tab:FactorVsAUC}
		\begin{tabular}{@{}lrr@{}}
			\toprule
			$\alpha$ & \methodedge-G & \methodedge-L \\
			\midrule
% 			$0.1$ & $0.962$ & $0.957$ \\
            $0.2$ & $0.964$ & $0.957$ \\
%           $0.3$ & $0.965$ & $0.958$ \\
            $0.4$ & $0.966$ & $0.959$ \\
%           $0.5$ & $0.967$ & $0.960$ \\
            $0.6$ & $0.968$ & $0.961$ \\
%           $0.7$ & $0.969$ & $0.962$ \\
            $0.8$ & $0.969$ & $0.964$ \\
            $0.9$ & $0.969$ & $0.966$ \\
            $0.95$ & $0.966$ & $0.966$ \\
			\bottomrule
		\end{tabular}
	% \vspace{-6mm}
\end{table}

\subsection{Hyperparameter Study}
\label{app:additionalresults}

Table \ref{tab:graphaucsupp} shows the performance of \methodgraph\ and \methodgraph-K for multiple time windows and edge thresholds. The edge threshold is varied in such a way that a sufficient number of anomalies are present within the time window. \methodgraph\ and \methodgraph-K achieve comparable results to those in Table \ref{tab:graph}. Table \ref{tab:FactorVsAUC} shows the robustness of \methodedge-G and \methodedge-L as we vary the temporal decay factor $\alpha$.

\section{Conclusion}
In this paper, we extend the CMS data structure to a higher-order sketch to capture complex relations in graph data and to reduce the problem of detecting suspicious dense subgraphs to finding a dense submatrix in constant time. We then propose four sketch-based streaming methods to detect edge and subgraph anomalies in constant update time and memory. Furthermore, our approach is the first streaming work that incorporates dense subgraph search to detect graph anomalies in constant memory and time. We also provide a theoretical guarantee on the submatrix density measure and prove the time and space complexities of all methods. Experimental results on four real-world datasets demonstrate our effectiveness as opposed to popular state-of-the-art streaming edge and graph baselines. Future work could consider incorporating rectangular H-CMS matrices, node and edge representations, more general types of data, including tensors, and parallel computing to process large dynamic graphs with a high volume of incoming edges.

\begin{acks}
This work was supported by the National Research Foundation Singapore, NCS Pte. Ltd., National University of Singapore under the NUS-NCS Joint Laboratory (Grant A-0008542-00-00), and by NSF under Grant SaTC-1930941.

\end{acks}

% \FloatBarrier
\clearpage

\bibliographystyle{ieeetr}
\balance
\bibliography{references}

\clearpage
\appendix
\section*{Appendix}

% \section{Higher-Order Sketch Proof}
% \label{app:proofshcms}

\setcounter{theorem}{0}
\setcounter{proposition}{0}

% \begin{theorem}
% H-CMS has the same estimate guarantees as the original CMS. 
% \end{theorem}
% \begin{proof}
% Consider a 3-dimensional H-CMS, with depth $n_r$, where an entity $a \in [1, N]$ is mapped to index $(i, j)$ with two independent hash functions $h': [1, N] \rightarrow [0, b)$ and $h'': [1, N] \rightarrow [0, b)$, i.e., $i = h'(a)$ and $j = h''(a)$. Without loss of generality, the 3-dimensional H-CMS can be converted to a CMS data structure by combining $h'$ and $h''$ in the following way: $h(a) = n_b*h'(a) + h''(a)$, i.e., $h(a) = n_b*i + j$ where $b*i \in [0, n_b^2 - n_b)$ and $j \in [0, n_b]$. Hence, $h(a) \in [0, b^2)$, and $h: [1, N] \rightarrow [0, b^2)$ can be a hash function for a CMS data structure with width $n_b^2$ and depth $n_r$. Therefore, the CMS estimate guarantee holds for a 3-dimensional H-CMS data structure. A higher dimensional H-CMS can be reduced to a CMS data structure in a similar manner.
% \end{proof}

\section{H-CMS Estimate Guarantee Proof}
\label{app:proofshcms}
\begin{theorem} \label{thm:1supp}
For all $k \in [n_r]$, let $h_k(u,v) = (h'_k(u), h''_k(v))$ where each of hash functions $h'_k$ and $h''_k$ is chosen uniformly at random from a pairwise-independent family. Here, we allow both cases of $h'=h''$ and $h'\neq h''$. Fix $\delta>0$ and set  $n_r = \ceil*{\ln \frac{1}{\delta}}$ and  $n_b^{}=\ceil{\frac{e}{\epsilon}}$.  Then, with probability at least $1-\delta$, $ \hat y(u,v)\le y(u,v)+\epsilon M$.
\end{theorem}

\begin{proof}
Fix $j \in [n_r]$. Let  $a=(u_{a}, v_{a})$ and $b=(u_{b}, v_{b})$ such that $a\neq b$. This implies that at least one of the following holds: $u_a \neq u_b$ or  $v_a \neq v_b$. Since $h'_j$ (and $h''_j$) is chosen uniformly at random from a pairwise-independent family, 
$P(h'_{j}(u_{a}) = h'_{j}(u_{b})) = \frac{1}{n_b}$ or $P(h''_{j}(v_{a}) = h''_{j}(v_{b})) = \frac{1}{n_b}$. If $P(h'_{j}(u_{a}) = h'_{j}(u_{b})) = \frac{1}{n_b}$, we have that $P(h_j(a) = h_j(b))=P(h'_j(u_{a}) = h'_j(u_{b}) \wedge h''_j(v_{a}) = h''_j(v_{b}))= P(h'_j(u_{a}) = h'_j(u_{b}))P(h''_j(u_{a}) = h''_j(u_{b})|h'_j(u_{a}) = h'_j(u_{b}) ) \le \frac{1}{n_b}=\frac{\epsilon}{e}$. Similarly, if $P(h''_{j}(v_{a}) = h''_{j}(v_{b})) = \frac{1}{n_b}$,  $P(h_j(a) = h_j(b))= P(h'_j(u_{a}) = h'_j(u_{b})|h''_j(u_{a}) = h''_j(u_{b}))P(h''_j(u_{a}) = h''_j(u_{b}) ) \le \frac{1}{n_b}=\frac{\epsilon}{e}$. 
Thus, in the both cases, the probability of the collision is $P(h_{j}(a) = h_{j}(b))=\frac{\epsilon}{e}$. Thus, by defining $X_{a,j}=\sum_{b} \one\{a \neq b  \wedge h_{j}(a) = h_{j}(b)\}y(b)$,
$
\EE[X_{a,j}]\le\sum_{b}^{} y(b)\EE[\one\{a \neq b  \wedge h_{j}(a) = h_{j}(b)\}]\le \frac{\epsilon}{e}M. 
$     
Since $\hat y(a)= \min_j y(a)+X_{a,j}$, this implies that $P(\hat y(a)> y(a)+\epsilon M)) =P(\min_j y(a)+X_{a,j}> y(a)+\epsilon M))=P(\min_j X_{a,j}>\epsilon M))\le P(\min_j X_{a,j}> e\EE[X_{a,j}])) $.
By  the Markov's inequality on the right-hand side, we have that $P(\hat y(a)> y(a)+\epsilon M)) \le P(\min_j X_{a,j}> e\EE[X_{a,j}]))\le e^{-d}\le \delta$.
\end{proof}

\section{Edge Anomalies-Proofs}

\subsection{\methodedge-G}
\label{app:proofsAnoEdge-G}
\begin{proposition}\label{thm:AnoEdge-G-time-supp}
Time complexity of Algorithm \ref{alg:AnoEdge-G} is $O(|\mathscr{E}|*n_r*n_b^2)$. Memory complexity of Algorithm \ref{alg:AnoEdge-G} is $O(n_r*n_b^2)$.
\end{proposition}
\begin{proof}
Procedure \textsc{Edge-Submatrix-Density} removes rows (or columns) iteratively, and the total number of rows and columns that can be removed is $n_b + n_b - 2$. In each iteration, the approach performs the following three operations: (a) pick the row with minimum row-sum; (b) pick the column with minimum column-sum; (c) calculate density. We keep $n_b$-sized arrays for flagging removed rows (or columns), and for maintaining row-sums (or column-sums). Operations (a) and (b) take maximum $n_b$ steps to pick and flag the row with minimum row-sum (or column-sum). Updating the column-sums (or rows-sums) based on the picked row (or column) again takes maximum $n_b$ steps. Time complexity of (a) and (b) is therefore $O(n_b)$. Density is directly calculated based on subtracting the removed row-sum (or column-sum) and reducing the row-count (or column-count) from the earlier density value. Row-count and column-count are kept as separate variables. Therefore, the time complexity of the density calculation step is $O(1)$. Total time complexity of procedure \textsc{Edge-Submatrix-Density} is $O((n_b + n_b - 2)*(n_b + n_b + 1)) = O(n_b^2)$.

Time complexity to initialize and decay the H-CMS data structure is $O(n_r*n_b^2)$. Temporal decay operation is applied whenever the timestamp changes, and not for every received edge. Update counts operation updates a matrix element value ($O(1)$ operation) for $n_r$ matrices, and the time complexity of this step is $O(n_r)$. Anomaly score for each edge is based on the submatrix density computation procedure which is $O(n_b^2)$; the time complexity of $n_r$ matrices becomes $O(n_r*n_b^2)$. Therefore, the total time complexity of Algorithm \ref{alg:AnoEdge-G} is $O(|\mathscr{E}|*(n_r + n_r*n_b^2)) = O(|\mathscr{E}|*n_r*n_b^2)$.
% \end{proof}

% \begin{proposition}\label{thm:AnoEdge-G-memory-supp}
% Memory complexity of Algorithm \ref{alg:AnoEdge-G} is $O(n_r*n_b^2)$.
% \end{proposition}
% \begin{proof}
For procedure \textsc{Edge-Submatrix-Density}, we keep an $n_b$-sized arrays to flag rows and columns that are part of the current submatrix, and to maintain row-sums and column-sums. Total memory complexity of \textsc{Edge-Submatrix-Density} procedure is $O(4*n_b) = O(n_b)$.

Memory complexity of H-CMS data structure is $O(n_r*n_b^2)$. Dense submatrix search and density computation procedure require $O(n_b)$ memory. For $n_r$ matrices, this becomes $O(n_r*n_b)$. Therefore, the total memory complexity of Algorithm \ref{alg:AnoEdge-G} is $O(n_r*n_b^2 + n_r*n_b) = O(n_r*n_b^2)$.
\end{proof}

\subsection{\methodedge-L}
\label{app:proofsAnoEdge-L}

\begin{proposition}\label{thm:AnoEdge-L-time-supp}
Time complexity of Algorithm \ref{alg:AnoEdge-L} is $O(n_r*n_b^2 + |\mathscr{E}|*n_r*n_b)$. Memory complexity of Algorithm \ref{alg:AnoEdge-L} is $O(n_r*n_b^2)$.
\end{proposition}
\begin{proof}
As shown in Proposition \ref{thm:AnoEdge-G-time}, the time complexity of H-CMS is $O(n_r*n_b^2)$ and update operation is $O(n_r)$. Current submatrix $(S_{cur}, T_{cur})$ is updated based on \emph{expand} and \emph{condense} submatrix operations. (a) We keep an $n_b$-sized array to flag the current submatrix rows (or column), and also to maintain row-sums (or column-sums). Expand submatrix operation depends on the elements from row $h(u)$ and column $h(v)$, and the density is calculated by considering these elements, thus requiring maximum $n_b$ steps. Upon addition of the row (or column), the dependent column-sums (or row-sums) are also updated taking maximum $n_b$ steps. Time complexity of expand operation is therefore $O(n_b)$. (b) Condense submatrix operation removes rows and columns iteratively. A row (or column) elimination is performed by selecting the row (or column) with minimum row-sum (or column-sum) in $O(n_b)$ time. Removed row (or column) affects the dependent column-sums (or row-sums) and are updated in $O(n_b)$ time. Time complexity of a row (or column) removal is therefore $O(n_b)$. Condense submatrix removes rows (or columns) that were once added by the expand submatrix operation which in worse case is $O|\mathscr{E}|$.

Expand and condense submatrix operations are performed for $n_r$ matrices. Likelihood score calculation depends on elements from row $h(u)$ and column $h(v)$, and takes $O(n_r*n_b)$ time for $n_r$ matrices. Therefore, the total time complexity of Algorithm \ref{alg:AnoEdge-L} is $O(n_r*n_b^2 + |\mathscr{E}|*n_r + |\mathscr{E}|*n_r*n_b + |\mathscr{E}|*n_r*n_b + |\mathscr{E}|*n_r*n_b) = O(n_r*n_b^2 + |\mathscr{E}|*n_r*n_b)$.
% \end{proof}

% \begin{proposition}\label{thm:AnoEdge-L-memory-supp}
% Memory complexity of Algorithm \ref{alg:AnoEdge-L} is $O(n_r*n_b^2)$.
% \end{proposition}
% \begin{proof}
Memory complexity of the H-CMS data structure is $O(n_r*n_b^2)$. To keep current submatrix information, we utilize $n_b$-sized arrays similar to Proposition \ref{thm:AnoEdge-G-time}. For $n_r$ matrices, submatrix information requires $O(n_r*n_b)$ memory. Hence, total memory complexity of Algorithm \ref{alg:AnoEdge-L} is $O(n_r*n_b^2 + n_r*n_b) = O(n_r*n_b^2)$.
\end{proof}

\section{Graph Anomalies-Proofs}

\subsection{\methodgraph}
\label{app:proofsAnoGraph}

\begin{proposition}\label{thm:AnoGraph-time-supp}
Time complexity of Algorithm \ref{alg:AnoGraph} is $O(|\mathscr{G}|*n_r*n_b^2 + |\mathscr{E}|*n_r)$. Memory complexity of Algorithm \ref{alg:AnoGraph} is $O(n_r*n_b^2)$.
\end{proposition}
\begin{proof}
Procedure \textsc{\methodgraph-Density} iteratively removes row (or column) with minimum row-sum (or column-sum). Maximum number of rows and columns that can be removed is $n_b + n_b - 2$. We keep $n_b$-sized arrays to store the current submatrix rows and columns, and row-sums and column-sums. At each iteration, selecting the row (or column) with minimum row-sum (or column-sum) takes $O(n_b)$ time, and updating the dependent row-sums (or column-sums) also $O(n_b)$ time. Density is calculated in $O(n_b)$ time based on the current submatrix row-sum and column-sum. Each iteration takes $O(n_b + n_b + n_b) = O(n_b)$ time. Hence, the total time complexity of \textsc{\methodgraph-Density} procedure is $O((n_b + n_b - 2)*n_b) = O(n_b^2)$.

Initializing the H-CMS data structure takes $O(n_r*n_b^2)$ time. When a graph arrives, \methodgraph: (a) resets counts that take $O(n_r*n_b^2)$ time; (b) updates counts taking $O(1)$ time for every edge update; (c) computes submatrix density that follows from procedure \textsc{\methodgraph-Density} and takes $O(n_b^2)$ time. Each of these operations is applied for $n_r$ matrices. Therefore, the total time complexity of Algorithm \ref{alg:AnoGraph} is $O(n_r*n_b^2 + |\mathscr{G}|*n_r*n_b^2 + |\mathscr{E}|*n_r + |\mathscr{G}|*n_r*n_b^2) = O(|\mathscr{G}|*n_r*n_b^2 + |\mathscr{E}|*n_r)$, where $|\mathscr{E}|$ is the total number of edges over graphs $\mathscr{G}$.
% \end{proof}

% \begin{proposition}\label{thm:AnoGraph-memory-supp}
% Memory complexity of Algorithm \ref{alg:AnoGraph} is $O(n_r*n_b^2)$.
% \end{proposition}
% \begin{proof}
For procedure \textsc{\methodgraph-Density}, we keep $n_b$-sized array to flag rows and columns that are part of the current submatrix, and to maintain row-sums and column-sums. Hence, memory complexity of \textsc{\methodgraph-Density} procedure is $O(4*n_b) = O(n_b)$. 

H-CMS data structure requires $O(n_r*n_b^2)$ memory. Density computation relies on \textsc{\methodgraph-Density} procedure, and takes $O(n_b)$ memory. Therefore, the total memory complexity of Algorithm \ref{alg:AnoGraph} is $O(n_r*n_b^2)$.
\end{proof}

% \begin{proposition}\label{thm:graphbaseddensityv1memory-supp}
% Memory complexity of Algorithm \ref{alg:AnoGraph} is $O(n_b^2)$.
% \end{proposition}
% \begin{proof}
% We keep array of size $n_b$ to mark rows and columns that are part of current submatrix. We also keep $n_b$ size array for row sums and column sums. Hence, memory complexity for Algorithm \ref{alg:AnoGraph} is $O(4*n_b) = O(n_b)$. 
% \end{proof}

\subsection{Proofs: \methodgraph-K}
\label{app:proofsAnoGraph-K}

\begin{proposition}\label{thm:AnoGraph-K-time-supp}
Time complexity of Algorithm \ref{alg:AnoGraph-K} is $O(|\mathscr{G}|*K*n_r*n_b^2 + |\mathscr{E}|*n_r)$. Memory complexity of Algorithm \ref{alg:AnoGraph-K} is $O(n_r*n_b^2)$.
\end{proposition}
\begin{proof}
Relevant operations in Procedure \textsc{\methodgraph-K-Density} directly follow from \textsc{Edge-Submatrix-Density} procedure, which has $O(n_b^2)$ time complexity. \textsc{Edge-Submatrix-Density} procedure is called $K$ times, therefore, the total time complexity of \textsc{\methodgraph-K-Density} procedure is $O(K*n_b^2)$.

For Algorithm \ref{alg:AnoGraph-K}, we initialize an H-CMS data structure that takes $O(n_r*n_b^2)$ time. When a graph arrives, \methodgraph-K: (a) resets counts that take $O(n_r*n_b^2)$ time; (b) updates counts taking $O(1)$ time for every edge update; (c) computes submatrix density that follows from procedure \textsc{\methodgraph-K-Density} and takes $O(K*n_b^2)$ time. Each of these operations is applied for $n_r$ matrices. Therefore, the total time complexity of Algorithm \ref{alg:AnoGraph-K} is $O(n_r*n_b^2 + |\mathscr{G}|*K*n_r*n_b^2 + |\mathscr{E}|*n_r + |\mathscr{G}|*n_r*n_b^2) = O(|\mathscr{G}|*K*n_r*n_b^2 + |\mathscr{E}|*n_r)$, where $|\mathscr{E}|$ is the total number of edges over graphs $\mathscr{G}$.
% \end{proof}

% \begin{proposition}\label{thm:AnoGraph-K-memory-supp}
% Memory complexity of Algorithm \ref{alg:AnoGraph-K} is $O(n_r*n_b^2)$.
% \end{proposition}
% \begin{proof}
The density of $K$ submatrices is computed independently, and the memory complexity of Algorithm procedure \textsc{\methodgraph-K-Density} is the same as the memory complexity of \textsc{Edge-Submatrix-Density} procedure i.e. $O(n_b)$.

Maintaining the H-CMS data structure requires $O(n_r*n_b^2)$ memory. Density computation relies on \textsc{\methodgraph-K-Density} procedure, and it requires $O(n_b)$ memory. Therefore, the total memory complexity of Algorithm \ref{alg:AnoGraph-K} is $O(n_r*n_b^2)$.
\end{proof}

\section{Experimental Setup}\label{sec:setup}
	All experiments are carried out on a $2.4 GHz$ Intel Core $i9$ processor, $32 GB$ RAM, running OS $X$ $10.15.3$. For our approach, we keep $n_r=2$ and $n_b=32$ to have a fair comparison to MIDAS which uses ${n_b}^2=1024$ buckets. Temporal decay factor $\alpha=0.9$ for Algorithms $\ref{alg:AnoEdge-G}$ and $\ref{alg:AnoEdge-L}$. We keep $K=5$ for Algorithm $\ref{alg:AnoGraph-K}$. AUC for graph anomalies is shown with edge thresholds as $50$ for \emph{DARPA} and $100$ for other datasets. Time window is taken as $30$ minutes for \emph{DARPA} and $60$ minutes for other datasets.

\section{Baselines}\label{sec:baselines}

We use open-source implementations of \densestream\ \citep{shin2017densealert}, \sedanspot\ \citep{eswaran2018sedanspot}, MIDAS-R \citep{bhatia2020midas} (C++), PENminer \citep{belth2020mining}, F-FADE \citep{chang2021f}, \densealert\ \citep{shin2017densealert}, and \anomrank\ \citep{yoon2019fast} provided by the authors, following parameter settings as suggested in the original paper. For \spotlight\ \citep{eswaran2018spotlight}, we used open-sourced implementations of Random Cut Forest \citep{awsrando88:online} and Carter Wegman hashing \citep{carter1979universal}.

% We use open-source implementations of \densestream\ \citep{shin2017densealert} (Java), \sedanspot\ \citep{eswaran2018sedanspot} (C++), MIDAS-R \citep{bhatia2020midas} (C++), PENminer \citep{belth2020mining} (Python), F-FADE \citep{chang2021f} (Python), \densealert\ \citep{shin2017densealert} (Java), and \anomrank\ \citep{yoon2019fast} (C++) provided by the authors, following parameter settings as suggested in the original paper. For \spotlight\ \citep{eswaran2018spotlight}, we used open-sourced implementations of Random Cut Forest \citep{awsrando88:online} and Carter Wegman hashing \citep{carter1979universal}.
    
    \subsection{Edge Anomalies}
    
    \begin{enumerate}
    \item {{\sedanspot:}}
		\texttt{sample\_size} $=10000$,
		\texttt{num\_walk} $=50$,
		\texttt{restart\_prob} $0.15$

    \item{{MIDAS:}} The size of CMSs is 2 rows by 1024 columns for all the tests.
	For MIDAS-R, the decay factor $\alpha=0.6$.

	\item{{PENminer:}}
		\texttt{ws} $=1$,
		\texttt{ms} $=1$,
		\texttt{view} = \texttt{id},
		\texttt{alpha} $=1$,
		\texttt{beta} $=1$,
		\texttt{gamma} $=1$

	\item{{\densestream:}} We keep default parameters, i.e., order $=3$.
	
	\item{{F-FADE:}}
		\texttt{embedding\_size} $=200$,
		\texttt{W\_upd} $=720$,
		\texttt{T\_th} $=120$,
		\texttt{alpha} $=0.999$,
		\texttt{M} $=100$

	For \texttt{t\_setup}, we always use the timestamp value at the $10^{th}$ percentile of the dataset.
	\end{enumerate}
	
	\subsection{Graph Anomalies}
	
	\begin{enumerate}

	\item{{\spotlight:}}
	\texttt{K} $=50$,
	\texttt{p} $=0.2$,
	\texttt{q} $=0.2$

	\item{{\densealert:}} We keep default parameters, i.e., order $=3$ and window=$60$.
	
	\item{{\anomrank:}} We keep default parameters, i.e., damping factor c $= 0.5$, and L1 changes of node score vectors threshold epsilon $= 10^{-3}$.
	We keep ${1/4}^{th}$ number of graphs for initializing mean/variance as mentioned in the respective paper.
	
    \end{enumerate}

\end{document}